\documentclass[11pt,letterpaper]{article}
\usepackage[margin=1in]{geometry}
\usepackage[T1]{fontenc}
\usepackage{lmodern}
\usepackage{amsmath,amssymb,amsfonts,amsthm}
\usepackage{mathtools}
\usepackage{microtype}
\usepackage{booktabs,array}
\usepackage[dvipsnames,table]{xcolor}
\usepackage{graphicx}
\usepackage{tikz}
\usetikzlibrary{arrows.meta,shapes.multipart}
\usepackage{enumitem}
\usepackage{xspace}
\usepackage{aliascnt}
\usepackage{natbib}
\bibpunct{[}{]}{,}{a}{,}{,}
\usepackage{url}
\usepackage{hyperref}
\usepackage[nameinlink,noabbrev]{cleveref}

\NewCommandCopy{\originalcitep}{\citep}
\NewCommandCopy{\originalcitet}{\citet}
\NewCommandCopy{\originalcitealp}{\citealp}
\RenewDocumentCommand{\citep}{s}{\originalcitep*}
\RenewDocumentCommand{\citet}{s}{\originalcitet*}
\RenewDocumentCommand{\citealp}{s}{\originalcitealp*}
\hypersetup{
  hypertexnames=false,
  colorlinks=true,
  pdftitle={Nash Equilibria in Network Coordination Games are CLS-Complete},
  pdfauthor={},
  linkcolor={rgb:red,0.12;green,0.10;blue,0.48},
  citecolor={rgb:red,0.18;green,0.42;blue,0.18},
  urlcolor={rgb:red,0.05;green,0.25;blue,0.65}
}
\allowdisplaybreaks
\setlist[itemize]{leftmargin=2em,itemsep=2pt,topsep=4pt}
\setlist[enumerate]{leftmargin=2em,itemsep=2pt,topsep=4pt}
\theoremstyle{plain}
\newtheorem{maintheorem}{Main Theorem}
\newtheorem{theorem}{Theorem}[section]
\newaliascnt{lemma}{theorem}
\newtheorem{lemma}[lemma]{Lemma}
\aliascntresetthe{lemma}
\newaliascnt{proposition}{theorem}

\aliascntresetthe{proposition}
\newaliascnt{corollary}{theorem}

\aliascntresetthe{corollary}
\newaliascnt{claim}{theorem}

\aliascntresetthe{claim}
\theoremstyle{definition}
\newaliascnt{definition}{theorem}
\newtheorem{definition}[definition]{Definition}
\aliascntresetthe{definition}

\theoremstyle{remark}
\newaliascnt{remark}{theorem}

\aliascntresetthe{remark}
\newaliascnt{observation}{theorem}

\aliascntresetthe{observation}
\crefname{maintheorem}{Main Theorem}{Main Theorems}
\Crefname{maintheorem}{Main Theorem}{Main Theorems}
\crefname{theorem}{Theorem}{Theorems}
\crefname{lemma}{Lemma}{Lemmas}
\crefname{proposition}{Proposition}{Propositions}
\crefname{corollary}{Corollary}{Corollaries}
\crefname{claim}{Claim}{Claims}
\crefname{definition}{Definition}{Definitions}
\crefname{remark}{Remark}{Remarks}
\crefname{observation}{Observation}{Observations}
\newcommand{\CLS}{\ensuremath{\mathsf{CLS}}\xspace}
\newcommand{\PPAD}{\ensuremath{\mathsf{PPAD}}\xspace}
\newcommand{\PLS}{\ensuremath{\mathsf{PLS}}\xspace}
\newcommand{\TFNP}{\ensuremath{\mathsf{TFNP}}\xspace}
\newcommand{\QKKT}{\textnormal{\textsc{Quadratic-KKT}}\xspace}
\newcommand{\SFKKT}{\textnormal{\textsc{Bilinear-KKT}}\xspace}
\newcommand{\MaxCut}{\textnormal{\textsc{Local-MaxCut}}\xspace}
\newcommand{\NCN}{\textnormal{\textsc{Network-Coordination-Nash}}\xspace}
\newcommand{\NCGN}{\textnormal{\textsc{Network-Congestion-Nash}}\xspace}

\newcommand{\one}{\boldsymbol{1}}

\newcommand{\R}{\mathbb{R}}
\newcommand{\Q}{\mathbb{Q}}

\newcommand{\transpose}{\mathsf{T}}

\newcommand{\red}{\leq_p}
\newcommand{\shiftedbit}{\tilde{b}}
\newcommand{\vshiftedbit}{\tilde{\boldsymbol{b}}}

\usepackage{amsmath,amsfonts,bm}
\usepackage{multirow,array}
\usepackage{diagbox}

\newcommand{\defeq}{\coloneqq}

\def\va{{\boldsymbol{a}}}
\def\vb{{\boldsymbol{b}}}
\def\vc{{\boldsymbol{c}}}
\def\vd{{\boldsymbol{d}}}
\def\ve{{\boldsymbol{e}}}

\def\vj{{\boldsymbol{j}}}

\def\vs{{\boldsymbol{s}}}

\def\vu{{\boldsymbol{u}}}

\def\vx{{\boldsymbol{x}}}
\def\vy{{\boldsymbol{y}}}
\def\vz{{\boldsymbol{z}}}

\def\vF{{\boldsymbol{F}}}

\DeclareMathAlphabet{\mathsfit}{\encodingdefault}{\sfdefault}{m}{sl}
\SetMathAlphabet{\mathsfit}{bold}{\encodingdefault}{\sfdefault}{bx}{n}

\newcommand{\calG}{\ensuremath{\mathcal{G}}}

\newcommand{\calP}{\ensuremath{\mathcal{P}}}

\renewcommand{\bar}[1]{\overline{#1}}

\newcommand{\vbeta}{\boldsymbol{\beta} }

\newcommand{\mat}[1]{\mathbf{#1}}

\usepackage{amsthm}
\usepackage{mathtools}
\usepackage{amsmath}
\usepackage{bbm}
\usepackage{amsfonts}

\title{The Complexity of Nash Equilibrium in Network\\ Congestion and Coordination Games}
\author{
    Ioannis Anagnostides\\
    {\small Carnegie Mellon University}
    \and
    Ioannis Panageas\\
    {\small University of California, Irvine}
    \and
    Jingming Yan\\
    {\small University of California, Irvine}
}
\date{}
\begin{document}
\maketitle

\begin{abstract}
We show that computing a Nash equilibrium is $\CLS$-complete for linear network congestion and network coordination games. As a result, finding a KKT point of a bilinear polynomial is $\CLS$-complete. %Moreover, we prove that finding a Nash equilibrium in network congestion games is also $\CLS$-completem even under linear latencies.
%This resolves an open question asked by Cai and Daskalakis [SODA'11], Daskalakis and Papadimitriou [SODA'11], and more recently highlighted by~Fearnley, Goldberg, Hollender, and Savani [JACM'25]. Moreover, we prove that finding a Nash equilibrium in linear network congestion games is also $\CLS$-complete.
\end{abstract}

\section{Introduction}
\label{sec:introduction}

The search for Nash equilibria has been an enduring challenge at the heart of algorithmic game theory. Landmark results characterized its complexity in general-sum games~\citep{Daskalakis09:Complexity,Chen09:Settling,Rubinstein16:Settling}, crystallizing a beautiful theory revolving around \PPAD, a complexity class introduced by~\citet{Papadimitriou94:Complexity}. These results have opened a new research frontier, shifting the focus to understanding the complexity of Nash equilibria in more structured strategic interactions.

Few classes of games have received as much scrutiny in algorithmic game theory as \emph{network congestion games}, having played a central role in the development of the \emph{price of anarchy}~\citep{Koutsoupias99:Worst,Roughgarden02:Bad,Christodoulou05:Price}. Such games model traffic routing, where each player selects a path through a network to minimize their total travel time, and the latency on each edge depends on the number of players using it. Network congestion games admit a potential function and thus a \emph{pure} Nash equilibrium~\citep{Rosenthal73:Class,Monderer96:Potential}. \citet{Fabrikant04:Complexity} famously proved that finding one is complete for $\PLS$---a complexity class introduced by~\citet{Johnson88:Easy}; this result prompted extensive work on its inapproximability~\citep{Skopalik08:Inapproximability,Caragiannis11:Efficient}.

Another well-studied class is \emph{network coordination games}~\citep{DP2011}. Such a game is represented by a graph whose nodes correspond to players and whose edges specify two-player games with \emph{identical interests}. Each player's utility is the sum of their payoffs from incident edges. This can be equivalently thought of as a \emph{polymatrix game}~\citep{Cai16:Zero} in which each edge corresponds to a coordination game. Just like network congestion games, network coordination games admit a \emph{pure} Nash equilibrium, and \citet{CD2011} showed that finding one is again \PLS-complete, making the problem computationally equivalent to that in network congestion games. However, the complexity of finding a (\emph{mixed}) Nash equilibrium in either network congestion or network coordination games has remained an elusive open problem.

To cast light on the complexity of problems that belong to both $\PPAD$ and $\PLS$, \citet{DP2011} introduced in their seminal work the complexity class $\CLS$, which stands for \emph{continuous local search}. They placed the problem of finding a Nash equilibrium in network coordination and network congestion games in $\CLS$, but they left open its precise complexity. Following the breakthrough result of~\citet{FGHS2022} that established $\CLS = \PPAD \cap \PLS$, there has been much progress shaping the landscape within $\CLS$. $\CLS$-hardness provides strong evidence of intractability. Notably, \citet{HY20:Hardness} established black-box lower bounds and white-box hardness under cryptographic assumptions for $\CLS$.

A key reference point to our work is the $\CLS$-completeness of~\citet{BR2021} for finding mixed Nash equilibria in congestion games. Their reduction relies on identical-interest interactions involving \emph{five} players, whereas network \emph{coordination} games permit only \emph{pairwise} (two-player) interactions. Moreover, \emph{network} congestion games require each player’s strategies to be paths in a common network, so their reduction did not establish hardness for that more structured class; this open problem was also highlighted by~\citet{FGHS2022}. More recently, \citet{FGHS2025} proved that finding a \emph{Karush–Kuhn–Tucker (KKT)} point---an optimization analogue of a Nash equilibrium---of a quadratic polynomial over a box is $\CLS$-complete. Their construction, however, crucially hinges on squared variable terms, which cannot arise in the bilinear potential of a network coordination game. The complexity of finding a KKT point of a bilinear polynomial was the main question posed by~\citet{FGHS2025}, and is arguably the main piece missing from characterizing the complexity of KKT points.

\subsection{Our results}
\label{sec:results}

In more detail, the problem of finding a Nash equilibrium of a network congestion game can be written as follows.

\medskip
\noindent\NCGN
\begin{itemize}
    \item \emph{Input:} A directed graph $G=(V,E)$ and $N$ players, where every player $i\in[N]$ has a source $s_i\in V$ and a
    destination $t_i\in V$ connected by a directed path. Every edge $e\in E$
    has a latency function $\ell_e(k)=\alpha_e k$, where
    $\alpha_e\in\Q_{\geq0}$ and $k$ is the number of players using $e$.
    Each player $i$ chooses a simple $s_i$-$t_i$ path and incurs the sum of the latencies of its edges.
    \item \emph{Output:} An exact Nash equilibrium, represented for each
    player $i$ by a list of polynomially many simple $s_i$-$t_i$ paths
    and the rational probabilities with which the player chooses them.
\end{itemize}

The complexity of this problem has remained open since at least the work of~\citet{Daskalakis06:Game}. Our first main result resolves this question.

\begin{maintheorem}
\label{thm:congestion-hardness}
\NCGN is $\CLS$-complete.
\end{maintheorem}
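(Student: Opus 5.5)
Membership in $\CLS$ is already known: \citet{DP2011} place the problem in $\CLS$, and by $\CLS=\PPAD\cap\PLS$ \citep{FGHS2022} it suffices to note the following. A mixed Nash equilibrium of a linear congestion game is a KKT point of the multilinear expected Rosenthal potential over a product of simplices. Each player's simplex is indexed by exponentially many paths, but we can pass to the flow polytope of each $s_i$-$t_i$ network. There the expected potential becomes a polynomial of degree at most two in the edge-flow variables $x_{i,e}$, because latencies are linear, so cross terms are bilinear between distinct players plus linear self terms. A KKT point of this polynomial over a polytope with polynomially many constraints is in $\CLS$ by the standard gradient-descent formulation. An exact rational solution exists and has polynomial bit-length, since the KKT conditions form a linear complementarity system once the supports are fixed. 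Decomposing each flow into at most $|E|$ paths gives the required output format. So the real work is hardness.

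For hardness I would reduce from a $\CLS$-complete KKT problem. The natural source is \QKKT of \citet{FGHS2025}, or better its bilinear variant over a box, which I would establish along the way. It is the abstract's companion result and may be proved first via network coordination games. The reason to prefer a bilinear source is that the potential of a linear network congestion game has no squared terms in mixed strategies across players, so a bilinear polynomial $p(\vx,\vy)=\sum_{i<j} c_{ij}x_ix_j+\sum_i c_i x_i$ over $[0,1]^n$ matches its structure. Each variable $x_i$ becomes one player with a two-path gadget: a "top" path and a "bottom" path from $s_i$ to $t_i$, chosen with probabilities $x_i$ and $1-x_i$.

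The interaction $c_{ij}x_ix_j$ is realized by a shared edge with latency $\alpha k$ lying on the top path of both $i$ and $j$. This contributes $\alpha x_ix_j$ to player $i$'s expected cost derivative, and linear terms come from private edges. Negative coefficients are handled by routing the shared edge on top for $i$ and bottom for $j$, which yields $\alpha x_i(1-x_j)$, and then absorbing the resulting linear parts into private edges. Rescaling and shifting all coefficients makes them nonnegative.

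The main obstacle is the \emph{network} constraint. Each player's strategies must be exactly the simple $s_i$-$t_i$ paths in one common graph, yet shared edges create shortcuts: player $i$ could enter player $j$'s gadget through a shared edge and leave on another, producing unintended paths. I would first try to prevent this by ordering the variables and threading each player's gadget through a sequence of "crossing" edges laid out in a grid. Each crossing edge is shared by exactly two players, and the shortcuts would be given huge private latencies $M$ that no equilibrium uses. The penalty must be calibrated so that deviating to a shortcut is never beneficial, but it must also keep the bit-lengths polynomial. Finally I would verify that any Nash equilibrium restricts to the intended two-path supports, and that the probabilities $x_i$ then form a KKT point of the source polynomial, up to the affine rescaling.
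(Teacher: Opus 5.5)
Taking hardness of \SFKKT (equivalently \NCN) as a black box is fine, since the paper also reduces from \NCN. Your edge-sharing algebra (a shared edge of latency $\alpha k$ on a top--top or top--bottom crossing, with the linear parts absorbed elsewhere) is also sound. The gap is at the step you yourself flag as the main obstacle: giving ``shortcuts'' a huge latency $M$ cannot be implemented. Suppose the routes of $i$ and $j$ share an edge $e=(u,v)$. The unintended path that reaches $u$ along $i$'s route, takes $e$, and leaves $v$ along $j$'s route uses \emph{only} edges that also lie on intended paths. No edge is used only by deviations, so any latency large enough to deter the turn is also paid by some intended path. The same holds for a player that enters on one of its own two routes and leaves on the other, a case your plan does not consider. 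Hence the claim that every equilibrium is supported on the intended paths has no mechanism behind it.

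The paper, adapting \citet{Ackermann08:Impact}, uses a balanced pricing of intended edges instead of a local penalty. It builds a lower-triangular directed grid with one row and one column per action label $r$. Horizontal edges in row $r$ have latency $rD$, vertical edges have latency $0$, and private source and destination tolls $f_r,g_c$ are chosen so that $f_r+H(r,c)+g_c=T$ for every admissible entry row $r$ and exit column $c$ of the same player. Here $H(r,c)=D\sum_{j=1}^{c-1}\max\{r,j+1\}$ is a lower bound on the horizontal cost forced by monotonicity and the triangular shape. Only the canonical path (along row $r$ to the diagonal, then down column $r$) attains $H(r,c)$. The two near-misses $c=r\pm1$ are ruled out by a non-crossing bridge at $(2i,2i-1)$. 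So every other path pays at least $D$ extra, and $D>4NW$ makes it strictly dominated against \emph{arbitrary} profiles of the others.

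You would also need these tolls to stay (near-)constant under purely linear latencies. Source and destination edges are private, so latency $zk$ suffices there. Grid edges, however, can be loaded by deviating players, which could multiply a canonical path's horizontal cost by up to $N$ and break dominance. The paper handles this by attaching $K=16N^3$ auxiliary single-path players to each horizontal edge and using latency $\frac{z}{K+1}k$. (On membership, the existence of small rational KKT points is not by itself an exact-solution argument; the paper simply cites \citet{DP2011} and \citet{Filos24:Ppad}.)
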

Network congestion games are often defined more broadly, with nonlinear latencies (\emph{e.g.},~\citealp{Roughgarden15:Intrinsic}). Our result shows that \emph{even with linear latencies}, the problem is $\CLS$-complete; prior to our work, this had remained open even under more general families of latencies.

We next turn our attention to network coordination games. For our hardness result, it will suffice to restrict to binary-action games, formally described below.

\medskip
\noindent\NCN
\begin{itemize}
    \item \emph{Input:} An undirected graph $G=(V,E)$, where every player
    $i\in V$ has two actions $\{0,1\}$, and every edge $(i,j)\in E$
    is associated with a rational $2\times 2$ common-payoff matrix with
    entries in $[0,1]$. Each player's payoff is the sum of the payoffs
    received from its incident edges.
    \item \emph{Output:} An exact Nash equilibrium, represented by a vector
    $\vx\in[0,1]^{|V|}$, where $x_i$ is the probability with which player $i$
    chooses action $1$.
\end{itemize}

The complexity of this problem was left open by~\citet{CD2011} and~\citet{DP2011}, and was more recently highlighted by~\citet{FGHS2025}. Our second contribution is to resolve this problem.

\begin{maintheorem}
\label{thm:main}
\NCN is $\CLS$-complete.
\end{maintheorem}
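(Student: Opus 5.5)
Membership in $\CLS$ follows from \citet{DP2011}; alternatively, a Nash equilibrium of a binary-action network coordination game is exactly a KKT point of the multilinear potential
\[
\Phi(\vx)=\sum_{(i,j)\in E}\Bigl[(1-x_i)(1-x_j)A^{ij}_{00}+(1-x_i)x_jA^{ij}_{01}+x_i(1-x_j)A^{ij}_{10}+x_ix_jA^{ij}_{11}\Bigr]
\]
over $[0,1]^{|V|}$. Since each player's payoff is linear in its own $x_i$, $\partial\Phi/\partial x_i$ is precisely the payoff difference between actions $1$ and $0$. Finding a KKT point of a polynomial over a box lies in $\CLS$ \citep{FGHS2022}. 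Exactness with rational output is also fine: once the support pattern is guessed, the KKT conditions are linear, so an approximate solution can be rounded to an exact one by solving a linear system.

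For hardness I will reduce from \QKKT, which \citet{FGHS2025} showed is $\CLS$-complete: given a quadratic $p(\vx)=\sum_{i,j}Q_{ij}x_ix_j+\sum_i c_ix_i$ on $[0,1]^n$, find a KKT point. The bilinear cross terms $x_ix_j$ with $i\neq j$ are directly realizable as edge games. The obstacle is the squared terms $Q_{ii}x_i^2$, which have no edge counterpart. My plan is to replace each variable $x_i$ by two copies $y_i,z_i$, and replace $Q_{ii}x_i^2$ by $Q_{ii}y_iz_i$. I will symmetrize the cross terms, $Q_{ij}x_ix_j\mapsto \tfrac{1}{2}Q_{ij}(y_iz_j+z_iy_j)$, and add a strong coupling term $M\,(y_iz_i+(1-y_i)(1-z_i))$, i.e.\ a coordination edge rewarding agreement, with $M$ large.

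The key step is showing that every KKT point of the resulting bilinear potential $\Psi(\vy,\vz)$ yields a KKT point of $p$ via $x_i=(y_i+z_i)/2$. There are two sub-steps.
\begin{itemize}
\item First, show that at a KKT point $y_i=z_i$ for all $i$. Subtracting the first-order conditions for $y_i$ and $z_i$ gives, in the interior, $(2M-Q_{ii})(z_i-y_i)+(\text{cross terms})=0$. The cross terms vanish when the other pairs are already equal, so a simultaneous argument is needed. Summing $\sum_i (y_i-z_i)\bigl(\partial_{y_i}\Psi-\partial_{z_i}\Psi\bigr)$ yields a quadratic form in $\vy-\vz$ of the type $-\,(\vy-\vz)^{\transpose}(2M I - Q')(\vy-\vz)$. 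This form is negative definite once $M>\|Q\|$, while the KKT/boundary conditions force the sum to be $\ge 0$. Boundary cases need care, since the sign conditions at $0$ and $1$ must be handled separately.
\item Second, once $\vy=\vz$, the condition for $y_i$ becomes $\partial_{x_i}p$ up to a factor $\tfrac12$, because the coupling derivative vanishes at equality. The boundary sign conditions then transfer directly.
\end{itemize}
Since exact equality $y_i=z_i$ could fail, I would alternatively reduce from the approximate version of \QKKT, which \citet{FGHS2025} also show is $\CLS$-hard, and accept an $O(\varepsilon)$ loss.

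Finally, I will convert $\Psi$ into a network coordination game. Each monomial $y_iz_j$, $y_i$, or constant corresponds to an edge payoff matrix between players $y_i$ and $z_j$; linear terms can be absorbed into any incident edge. I then shift and scale all entries into $[0,1]$ by a positive affine map, which preserves Nash equilibria.

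The main obstacle is the first sub-step of the key step: proving that the doubled copies agree at every KKT point, including the boundary cases, or controlling the approximation error well enough if they only approximately agree.
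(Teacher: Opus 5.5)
\textbf{There is a genuine gap. It sits in your second sub-step, not the first.} Your membership argument and the final game realization are fine; the problem is the hardness reduction. The coupling derivative does not vanish at equality: $\partial_{y_i}\bigl[y_iz_i+(1-y_i)(1-z_i)\bigr]=2z_i-1$, which equals $2y_i-1$ on the diagonal.

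Write everything as a minimization, so the agreement-rewarding coupling is $-M(\cdots)$. Your $\Psi$ is symmetric in $(\vy,\vz)$, so at a point $\vy=\vz=\vx$ its KKT conditions are exactly those of the diagonal restriction $\Psi(\vx,\vx)=p(\vx)-M\sum_i\bigl(x_i^2+(1-x_i)^2\bigr)$. That is a different quadratic, whose gradient differs from $\nabla p$ by $-2M(2x_i-1)$ in coordinate $i$.

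With this sign, the sub-step you flagged as the main obstacle actually goes through, boundary cases included. The KKT conditions give $d_i(\partial_{y_i}\Psi-\partial_{z_i}\Psi)\le 0$ termwise, while the sum equals $\vd^\transpose(2MI-Q)\vd$, which forces $\vd=0$ once $2M>\lambda_{\max}(Q)$. But that is exactly the regime where the gradient distortion is of order $M$. So falling back to approximate \QKKT does not help: the error is $\Theta(M)$, not $O(\epsilon)$.

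Concretely, take $p(x)=\tfrac12x^2-\tfrac12x$ on $[0,1]$, whose only KKT point is $x=\tfrac12$. Your potential is $\Psi=\tfrac12yz-\tfrac14(y+z)-M\bigl(yz+(1-y)(1-z)\bigr)$. For $M\ge\tfrac14$, $\partial_y\Psi(1,1)=\partial_z\Psi(1,1)=\tfrac14-M\le 0$, so $(1,1)$ is a KKT point that decodes to $x=1$. For $M\le\tfrac14$, the point $(1,0)$ is a KKT point with $y\neq z$.

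\textbf{This cannot be fixed by tuning $M$ or by a cleverer gadget of the same kind.} Any bilinear $\Psi$ is affine in each coordinate, so it attains its minimum over the box at a vertex, and that vertex is a KKT point. If every KKT point had $\vy=\vz$ with $(\vy+\vz)/2$ a KKT point of $p$, you would get a Boolean KKT point of $p$, which need not exist (as in the example).

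The paper's coupling is instead $-\sum_{i<j}d_id_j=\tfrac12\|\vd\|^2-\tfrac12(\one^\transpose\vd)^2$. This vanishes to second order on the diagonal, so it preserves $p$ exactly. But it cannot force agreement on its own: it leaves the derivative difference $(2-\mat{Q}_{ii})d_i-2\sum_j d_j$ (in the paper's normalization), which is not positive definite because of the rank-one term.

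\textbf{The missing idea is the two-way (``either'') decoding that absorbs the unavoidable Boolean KKT points.} The paper adds per-block control variables $s_q,s_q'$. It builds an integer bilinear local-search potential $h$ (grid discretization of $p$ followed by the Sch\"affer--Yannakakis reduction to \MaxCut, \Cref{lem:boolean-alternative}). It evaluates $h$ at $\bar{s}_q-\tau\sum_j d_{q,j}$, so that a single residual $R_q$ enters the derivatives of both the copies and the controls. If some control is fractional, $R_q=0$ cancels the rank-one term, the copies agree, and their common value is an exact KKT point of $p$. If all controls are Boolean (precisely the vertex solutions that defeat your scheme), they form a local minimum of $h$, which a separate decoder maps to a KKT point of $p$.
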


An immediate consequence of this, combining with the membership results of~\citet{Anagnostides26:Complexity}, is $\CLS$-completeness for \emph{perfect} and \emph{proper} equilibria---two central equilibrium refinements introduced by~\citet{Selten75:Reexamination} and~\citet{Myerson78:Refinements}, respectively---in network coordination games.

Another important implication of our results is that computing a KKT point of a \emph{bilinear} polynomial---a quadratic polynomial in which no monomial contains the square of a variable---is \CLS-complete.

\begin{maintheorem}
\label{thm:squarefree}
Computing a KKT point of a bilinear polynomial over $[0,1]^m$ is $\CLS$-complete.
\end{maintheorem}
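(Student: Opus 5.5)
Membership and hardness are handled separately. The hardness direction follows from \cref{thm:main}, via the standard correspondence between Nash equilibria of a binary-action network coordination game and KKT points of its potential over the hypercube. Membership follows from the known result of \citet{FGHS2025} that computing a KKT point of a quadratic polynomial over a box is in $\CLS$. Bilinear polynomials are a special case of quadratic ones, and $[0,1]^m$ is a box, so the membership part is immediate.

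\textbf{Hardness.} Given an instance of \NCN with graph $G=(V,E)$ and edge matrices $A^{(ij)}\in[0,1]^{2\times 2}$, we define the potential
\[
\Phi(\vx)=\sum_{(i,j)\in E}\sum_{a,b\in\{0,1\}} A^{(ij)}_{ab}\,\Pr[x_i=a]\Pr[x_j=b],
\]
where $\Pr[x_i=1]=x_i$ and $\Pr[x_i=0]=1-x_i$. Each edge contributes a polynomial of the form $c_0+c_1x_i+c_2x_j+c_3x_ix_j$ with rational coefficients, so $\Phi$ is bilinear with no squared terms. Its size is polynomial in the input.

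The key identity is that player $i$'s expected utility at $\vx$, as a function of $x_i$ with the other coordinates fixed, equals $\Phi(\vx)$ minus terms not depending on $x_i$. This holds because every edge term involving $x_i$ is exactly player $i$'s payoff from that edge, since payoffs are common. Consequently $\partial\Phi/\partial x_i$ equals $u_i(1,\vx_{-i})-u_i(0,\vx_{-i})$.

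We then check that the KKT conditions for maximizing $\Phi$ over $[0,1]^{|V|}$ coincide with the Nash conditions:
\begin{itemize}
\item if $0<x_i<1$, the derivative is $0$, so player $i$ is indifferent between its actions;
\item if $x_i=0$, the derivative is at most $0$, so action $0$ is a best response;
\item if $x_i=1$, the derivative is at least $0$, so action $1$ is a best response.
\end{itemize}
Hence the identity map sends any exact KKT point of $\Phi$, equivalently of $-\Phi$ if the problem is phrased as minimization, to an exact Nash equilibrium. This is a polynomial-time reduction.

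\textbf{Main obstacle.} The only delicate point is matching the precise problem formulation in \citet{FGHS2025}, which may ask for approximate KKT points or allow violation outputs such as a failure of Lipschitzness. For a bilinear polynomial, the gradient is given explicitly and is Lipschitz with a polynomially bounded constant. So an $\varepsilon$-KKT point with $\varepsilon$ inverse-exponential, chosen relative to the bit complexity of the coefficients, can be rounded to an exact KKT point. I would do this rounding by fixing the support pattern, meaning the coordinates that are at $0$, at $1$, or interior, and solving the resulting linear system with the gradient-sign constraints as a linear program. The solution is rational with polynomial bit size, which also makes the exact-output formulation of \NCN meaningful. Alternatively, if \cref{thm:main} is already proved for the exact version, hardness transfers directly, since the reduction above preserves exactness.
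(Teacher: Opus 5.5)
Your argument is correct, but it runs in the opposite direction from the paper's proof. You obtain hardness of bilinear KKT as a corollary of \Cref{thm:main}. The potential $\Phi_{\calG}$ of a network coordination game is bilinear (after dropping the irrelevant constant), and by the Nash/KKT correspondence, which is exactly \Cref{lem:potential-game}, every KKT point of $-\Phi_{\calG}$ is a Nash equilibrium. This gives $\NCN\red\SFKKT$ with the identity decoder.

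This matches how the abstract phrases the result. In the paper's actual dependency structure, however, the bilinear problem comes first. \Cref{prop:exact-reduction} proves $\QKKT\red\SFKKT$ directly, using the two-copy polynomial $\widehat P$, the control variables $s_q,s_q'$, and the fractional/Boolean ``either'' decoding. \Cref{thm:main} is then derived from it through \Cref{lem:game-realization}. Unfolded, your proof is therefore the round trip $\QKKT\red\SFKKT\red\NCN\red\SFKKT$.

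This is not circular, since \Cref{thm:main} is established without invoking the statement you are proving. But the last two reductions are redundant, and all of the substance sits inside the black box you cite. The paper's route gives bilinear hardness with no game encoding at all. Yours is a short corollary once the game result is available, and together with \Cref{lem:game-realization} it shows the two problems are polynomially equivalent.

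Membership is handled identically in both, since a bilinear instance is a \QKKT\ instance with $\mat{Q}_{ii}=0$. Your ``main obstacle'' does not arise: \QKKT\ is taken in its exact form throughout, and \Cref{lem:exact-recovery} already supplies any rounding needed. If you did need your pattern-fixing LP, you would also have to add a Farkas-type bit-size argument showing that feasibility of the $\epsilon$-relaxed pattern system forces feasibility of the exact one.
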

This resolves the main lingering open question concerning the complexity of computing KKT points in polynomial optimization. Beyond their intrinsic interest, KKT points play a key role in many other problems, as discussed further in~\Cref{sec:related-work}. Our result shows that existing hardness results persist even under substantially stronger structural restrictions.

\subsection{Technical overview}
\label{sec:overview}

We now present an overview of our reduction. The starting point is the $\CLS$-complete problem \QKKT \citep{FGHS2025}. To reduce \QKKT\ to \SFKKT, the main technical challenge is to eliminate all square terms while ensuring that every KKT point of the constructed instance can be decoded into a source solution. A natural starting point is to introduce two copies $\vx,\vy$ and replace each square term by the product of its two copies. If the two copies agree at a KKT point, that is, $\vx=\vy$, their common value can be used to recover a solution
to the original \QKKT instance. However, the crux lies in ensuring that two copies will always agree at any KKT point.

More fundamentally, every bilinear potential over $[0,1]^n$ admits a global minimum---and hence a KKT point---at a Boolean point in $\{0, 1\}^n$. %Indeed, since a bilinear function is affine in each coordinate separately, starting from any global minimum we can move each coordinate to $0$ or $1$ while preserving the minimum objective value. 
By contrast, a quadratic objective need not admit any Boolean KKT point. Thus, not every KKT point of the constructed instance can be decoded directly into a KKT point of the source instance. Instead, we need a different way to interpret the Boolean solutions.

\paragraph{Basic idea.} The core of our reduction is an \textit{either} argument, which is a two-way decoding technique. This was used by~\citet{FGHS2022} in their breakthrough result, and more recently by~\citet{BC2026} to settle the complexity of computing first-order stationary points in min-max optimization. The overview of our reduction is given in~\Cref{fig:reduction-overview}.

In a nutshell, given a \QKKT\ instance $p$ over $[0,1]^m$, our construction provides two ways to recover a KKT point of $p$. Alongside the direct two-copy simulation of $p$, we construct a Boolean local search instance with potential $h$, whose local minima can be decoded into KKT points of $p$. We then construct a bilinear potential $\Phi$ that incorporates both the two-copy representation of $p$ and the Boolean potential $h$. For each coordinate $q$, we introduce a pair of \emph{control variables}
together with an associated pair of \emph{copy variables} $(\vx_q,\vy_q)$. These controls determine which of two decoding mechanisms applies.
If the control value is fractional for some coordinate $q$, a KKT point of $p$ can be recovered directly. Otherwise, all control values are Boolean and together form a local minimum of $h$, which can again be decoded into a KKT point of $p$. This dichotomy ensures that every KKT point of $\Phi$ can be mapped to a KKT point of the original instance $p$. Finally, we translate the resulting bilinear potential into a network coordination game and extend the hardness result to linear network congestion games through a separate reduction. %The two decoding scenarios and the reduction to linear network congestion games are summarized in~\Cref{fig:reduction-overview}.

\begin{figure}[htbp]
\centering
\resizebox{\linewidth}{!}{%
\begin{tikzpicture}[
  problem/.style={align=center, font=\fontsize{8.5}{10}\selectfont,
    inner sep=3pt, minimum height=.75cm},
  case box/.style={draw, rectangle split, rectangle split parts=2,
    rectangle split draw splits=false, align=center,
    font=\scriptsize, inner sep=5pt, text width=3.4cm},
  flow/.style={-{Latex[length=2mm]}, semithick}
]
\node[problem] (source) at (0,0) {\QKKT};
\node[case box] (fractional) at (4.5,1.15)
  {\textbf{Fractional} case\nodepart{second}Some control is fractional\\Direct recovery};
\node[case box] (boolean) at (4.5,-1.15)
  {\textbf{Boolean} case\nodepart{second}All controls are Boolean\\Local search decoder};
\foreach \case in {fractional,boolean} {
  \draw[line width=.2pt,shorten <=7pt,shorten >=7pt]
    (\case.text split west) -- (\case.text split east);
}
\node[problem] (target) at (8.8,0) {\SFKKT};
\node[problem, text width=3.2cm] (congestion) at (12.2,0) {\NCGN};
\draw[semithick] (source.east) -- (1.9,0);
\draw[flow] (1.9,0) |- (fractional.west);
\draw[flow] (1.9,0) |- (boolean.west);
\draw[semithick] (fractional.east) -| (6.8,0);
\draw[semithick] (boolean.east) -| (6.8,0);
\draw[flow] (6.8,0) -- (target.west);
\draw[flow] (target.east) -- (congestion.west);
\end{tikzpicture}
}
\caption{The two decoding scenarios in the reduction from \QKKT\ to \SFKKT.
Every KKT point of the constructed bilinear potential yields a KKT point of the source instance. We finally reduce network coordination games to linear congestion games.}
\label{fig:reduction-overview}
\end{figure}

\paragraph{The two-copy simulation.}
In more detail, we start from \QKKT, the problem of finding KKT points of
\[
\min_{\vz\in[0,1]^m}
p(\vz)
=
\frac{1}{2}\vz^\mathsf{T} \mat{Q}\vz+\vc^\mathsf{T} \vz,
\]
where $\mat{Q}$ is symmetric and $|\mat{Q}_{ii}|\leq 1$ for every $i\in[m]$. To eliminate the square terms, we introduce two copies $\vx,\vy$ and replace each square term by the product of the corresponding copy variables. This yields a bilinear polynomial $P$ satisfying
$
P(\vx,\vy)=2p(\vx)
$
whenever $\vx = \vy$. However, simply replacing square terms with products does not by itself guarantee that the two copies agree at a KKT point. Indeed, writing $d_i=x_i-y_i$, we have
$
\partial_{x_i}P(\vx,\vy)-\partial_{y_i}P(\vx,\vy)
=
-\mat{Q}_{ii}d_i.
$
When $\mat{Q}_{ii}>0$, this derivative difference drives the two copies apart. We therefore add the bilinear term
$
-\sum_{i<j}(x_i-y_i)(x_j-y_j)
$
to obtain $\widehat P$. The resulting derivative difference takes the form
$
\partial_{x_i}\widehat P-\partial_{y_i}\widehat P
=
(2-\mat{Q}_{ii})d_i-2\sum_j d_j.
$
Since $2-\mat{Q}_{ii}>0$, the first term now has the desired sign for ensuring agreement between the two copies once the remaining term vanishes. This is precisely the situation in what we call the \emph{fractional case}, where the other components of $\Phi$ cancel the term $-2\sum_j d_j$.

\paragraph{From continuous to discrete local search.}
In~\Cref{sec:boolean-alternative}, we construct a discrete local search potential $h$ that will be used to decode the Boolean case---when all control variables are either $0$ or $1$. To this end, we discretize the domain $[0,1]^m$ using a sufficiently fine grid that can be represented with polynomially many bits. We define a local move by increasing or decreasing a single coordinate to the adjacent grid value, provided the resulting point remains feasible. At any local minimum, no such move can decrease $p$, which implies approximate KKT conditions for the original continuous problem. Choosing the grid sufficiently fine, we can recover an exact KKT point of $p$ through the rounding argument of~\citet{FGHS2025}. Finally, applying the standard $\PLS$ reduction of \citet{SY1991}, we obtain a bilinear potential $h$ such that every local minimum of $h$ can be decoded into a KKT point of $p$.

\paragraph{Constructing the bilinear potential $\Phi$.} Next, in \Cref{sec:reduction}, we combine the two-copy simulation $\widehat P$ and the Boolean local search potential $h$ into a single bilinear potential $\Phi$. The key challenge is to couple the two constructions so that the KKT conditions with respect to the control variables enforce the desired behavior in both cases: when some control variable is fractional, we eliminate the error in the two-copy simulation, while when the control variables are Boolean, we get local optimality of $h$.

As discussed, the two-copy simulation $\widehat P$ leaves the undesired term $-2\sum_j d_j$ in the difference between the gradients of copy variables.
On the other hand, local optimality of the Boolean potential $h$ is encoded by its gradient $\nabla h$. Our goal is therefore to bring these
two quantities into the same gradient expressions. We achieve this by shifting each control value by the disagreement between the corresponding copy variables and evaluating $h$ at the resulting vector
$\widetilde{\vb}$. In this way, the local search potential depends directly on the copy disagreement.

Putting these ingredients together, we define $\Phi$ by combining the Boolean local search potential $h(\widetilde{\vb})$, the two-copy simulation $\widehat P$, and additional bilinear interaction terms designed to align their gradient contributions. With this choice, the undesired term from the two-copy simulation and the contribution from the Boolean potential are absorbed into a single residual term $R_q$. In the fractional case, the KKT conditions of the controls force $R_q$ to vanish, which ultimately drives $\vx_q=\vy_q$. In the Boolean case, the sign of the residual $R_q$ instead yields the local optimality condition of $h$. This is the fundamental building block behind our \emph{either} construction. We provide the details in~\Cref{sec:reduction}.

\paragraph{Reducing network coordination to network congestion games.} Finally, to establish~\Cref{thm:congestion-hardness}, we give a reduction from network coordination games to network congestion games, even when the latter has linear latency functions. To do so, we adapt the construction of~\citet{Ackermann08:Impact}, which reduced \emph{pure} Nash equilibria in \emph{quadratic threshold games}---a special class of congestion games---to pure Nash equilibria in linear network congestion games. The main difference in our reduction is that we start from a network coordination game and we also need to account for mixed Nash equilibria. Nonetheless, our construction is a relatively straightforward adaptation; we are surprised that, to our knowledge, this reduction was not observed before. We provide the details in~\Cref{sec:network-congestion}.

\paragraph{The \emph{either} argument and broader picture.} Taken together, our main technical contribution is a new two-way decoding argument, which goes beyond the scope of Nash equilibria in network congestion and network coordination games. More broadly, we view this argument as part of recent complexity progress in $\TFNP$~\citep{FGHS2022,BC2026}. We hope that this technique can facilitate further progress in
understanding the complexity landscape of $\TFNP$.

\subsection{Further related work}
\label{sec:related-work}

Several equilibrium computation problems have been shown to be $\CLS$-complete. \citet{Ghosh24:Complexity} and~\citet{Tewolde25:Computing} established $\CLS$-completeness for finding a symmetric Nash equilibrium of a symmetric identical-interest game; the former result holds even with two players. Moreover, certain solution concepts in imperfect-recall games are also $\CLS$-complete~\citep{Tewolde23:Computational,Tewolde24:Imperfect}. Combining our main result for network coordination games with the reduction of~\citet{Tewolde23:Computational}, it follows that computing an exact \emph{CDT equilibrium} in a single-player imperfect-recall game is $\CLS$-complete even with no absentmindedness, two actions per information set, and tree depth three (including an initial chance move). In \emph{adversarial team games}, where a team of identical-interest players competes against a single adversary, \citet{Anagnostides26:Algorithms} established $\CLS$-completeness; the hardness follows readily from~\citet{BR2021} in the multi-player setting. \citet{Hollender25:Complexity} proved $\CLS$-hardness for two-team \emph{polymatrix} games. Our main result shows that $\CLS$-hardness persists \emph{even without an adversary}. \citet{Anagnostides25:Complexity} later showed $\CLS$-hardness even in three-player polymatrix adversarial team games. 

Beyond equilibrium computation, $\CLS$ plays a key role in the complexity of fixed-point problems involving contractions under general metrics and metametrics~\citep{Fearnley17:CLS,Daskalakis18:Converse,Ishizuka21:Complexity}.

There are still many problems that are known to be in $\CLS$, but whose complexity remains open. For example, \citet{Fearnley19:Unique} introduced \emph{unique end of potential line}, a subclass of $\CLS$ that captures several problems with unique solutions. This was inspired by the class \emph{end of potential line}~\citep{Fearnley17:CLS}, which was subsequently shown to coincide with $\CLS$~\citep{Goos24:Further}. It would be interesting to see if the technical approach we develop in this paper can be leveraged to make progress in such problems.

\section{Preliminaries}
\label{sec:preliminaries}
\paragraph{Notation.} For a positive integer $d$, we use the notation $[d]\defeq \{1,\ldots,d\}$. We write $\ve_i$ for the $i$-th standard basis vector and $\one$ for the all-one vector, whose dimension will be clear from the context. All polynomial coefficients and payoff entries are rational and are given explicitly as part of the input. A polynomial time reduction between search problems consists of a polynomial time instance map and a polynomial time decoder that maps every
accepted target solution to an accepted source solution. We write $\red$ to denote such a reduction.

\subsection{Quadratic KKT conditions}
\label{sec:kkt}

For a vector $\vz \in [0, 1]^m,$ we define the quadratic function
\begin{equation}
 p(\vz)=\frac12\vz^\transpose \mat{Q} \vz+\vc^\transpose \vz,
\label{eq:source}
\end{equation} where $\mat{Q}$ is a symmetric matrix with $\mat{Q} = \mat{Q}^\transpose\in\Q^{m\times m}$.
We let $\vF(\vz) \defeq \nabla_{\vz} p(\vz) = \mat{Q} \vz+\vc$.

\begin{definition}[KKT points]
\label{def:kkt}
A point $\vz\in[0,1]^m$ is a \emph{KKT point} of $p$ if, for every $i\in[m]$,
\begin{equation}
 z_i>0\ \Longrightarrow\ F_i(\vz)\leq 0 \quad
 \text{and} \quad
 z_i<1\ \Longrightarrow\ F_i(\vz)\geq0.
 \label{eq:kkt} \tag{KKT}
\end{equation}
In particular, any interior coordinate has zero derivative at an exact KKT point. At
$0$ the derivative must be nonnegative and at $1$ nonpositive. Since positive rescaling preserves exact KKT points, for the rest of the paper, we assume
$
 |\mat{Q}_{ii}|\leq1 \textrm{ for }i\in[m].
$
For some $\epsilon\geq0$, a point $\vz\in[0,1]^m$ is an
\emph{$\epsilon$-KKT point} if, for every $i\in[m]$,
\[
z_i>0\ \Longrightarrow\ F_i(\vz) \leq \epsilon
\quad\text{and}\quad
z_i<1\ \Longrightarrow\ F_i(\vz) \geq - \epsilon.
\]
\end{definition}

\QKKT asks for a KKT point of \eqref{eq:source}, given explicitly as part of the input. \citet{FGHS2025} established that \QKKT is $\CLS$-complete. Its bilinear restriction, which we denote by \SFKKT, additionally requires $\mat{Q}_{ii}=0$ for every $i \in [m]$, so that the underlying function contains no square variable terms.

We will use the following rounding lemma~\citep[Lemma A.1]{FGHS2025}, which in turn relies on the rounding argument of~\citet{Etessami10:Complexity}.

\begin{lemma}[\citealp{FGHS2025}]\label{lem:exact-recovery}
    Given an instance p of \QKKT, we can compute in polynomial time an $\epsilon> 0$ such that from any $\epsilon$-KKT point of $p$, we can obtain an exact KKT point of $p$ in polynomial time.
\end{lemma}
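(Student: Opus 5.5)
The plan is to follow the rounding argument of \citet{Etessami10:Complexity}: guess the active/inactive pattern of an approximate solution, and then solve the resulting linear program exactly. Let $L$ denote the bit size of the instance $(\mat{Q},\vc)$. I would set $\epsilon \defeq 2^{-\mathrm{poly}(L)}$, with the polynomial fixed below by Cramer-type bounds on vertex denominators.

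Given an $\epsilon$-KKT point $\vz$, I would define the index sets
\[
S_0=\{i: z_i \le \delta\},\quad S_1=\{i: z_i\ge 1-\delta\},
\]
with the threshold $\delta$ polynomially related to $\epsilon$, for example $\delta=\sqrt{\epsilon}$. I would also record the sign of $F_i(\vz)$ whenever $|F_i(\vz)| > \delta$. From these data I would form the linear feasibility system $\mathcal{L}$ in the variables $\vz'$:
\begin{itemize}
    \item $z'_i=0$ and $F_i(\vz')\ge 0$ for $i\in S_0$;
    \item $z'_i=1$ and $F_i(\vz')\le 0$ for $i\in S_1$;
    \item $0\le z'_i\le 1$ and $F_i(\vz')=0$ for all other $i$.
\end{itemize}
Since $\vF$ is affine, this is a linear program. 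Any feasible point of $\mathcal{L}$ is an exact KKT point. It can be found in polynomial time by the ellipsoid or interior-point method, and its solution has polynomially bounded bit size.

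The main step, and the main obstacle, is showing that $\mathcal{L}$ is feasible. The point $\vz$ satisfies every constraint of $\mathcal{L}$ up to additive error $\eta=\mathrm{poly}(\|\mat{Q}\|,m)\cdot\max(\delta,\epsilon)$:
\begin{itemize}
    \item For $i\in S_0$, we have $F_i(\vz)\ge-\epsilon$ because $z_i<1$.
    \item For interior $i$, both KKT implications give $|F_i(\vz)|\le\epsilon$.
    \item The equalities $z_i'=0$ or $z_i'=1$ hold up to $\delta$ for $i\in S_0\cup S_1$.
\end{itemize}
So $\mathcal{L}$ is $\eta$-approximately feasible. By a Hoffman-type bound, or equivalently by LP duality plus Farkas' lemma, an infeasible rational system with coefficients of bit size $L$ has a Farkas certificate of bit size $\mathrm{poly}(L)$. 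That certificate witnesses infeasibility by a margin of at least $2^{-\mathrm{poly}(L)}$, which rules out $\eta$-approximate feasibility once $\eta$ is below that margin. Choosing $\epsilon$ (and $\delta$) as $2^{-\mathrm{poly}(L)}$ with a large enough polynomial therefore forces $\mathcal{L}$ to be exactly feasible.

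The delicate bookkeeping is to make the perturbation of the fixed coordinates and the slack in the $F$-constraints both fit into the single Farkas margin. For this I would scale the certificate so that its multipliers are bounded by $2^{\mathrm{poly}(L)}$, and then take $\epsilon$ much smaller than the reciprocal of that bound.

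Finally, the computation of $\epsilon$ is explicit from $L$, and the decoding consists of forming $\mathcal{L}$ and solving it; both steps run in polynomial time, which proves the lemma.
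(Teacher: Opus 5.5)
Your proposal is correct and matches the argument the paper imports from \citet{FGHS2025} (Lemma A.1, via the rounding technique of \citet{Etessami10:Complexity}). You read off the active pattern from the approximate point and write the exact KKT conditions for that pattern as a polynomial-size linear system. A Farkas margin of $2^{-\mathrm{poly}(L)}$, uniform over all $3^m$ patterns, then makes approximate feasibility imply exact feasibility, and the LP is solved exactly.

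One simplification: the threshold $\delta$ and the recorded signs of $F_i(\vz)$ are unnecessary. Taking $S_0=\{i: z_i=0\}$ and $S_1=\{i: z_i=1\}$ already makes $\vz$ violate each constraint of $\mathcal{L}$ by at most $\epsilon$, because every coordinate of an $\epsilon$-KKT point lying in $(0,1)$ has $|F_i(\vz)|\le\epsilon$.
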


\subsection{Network coordination games}
\label{sec:game-def}

A binary-action \emph{network coordination game} $\calG$ is given by an undirected graph
$G=(V,E)$, with $|V|=N$, and a $2\times 2$ matrix
$\mat{A}^{ij}$ for every edge
$(i,j)\in E$ with $i<j$. At a pure strategy profile $\va = (a_1,\ldots,a_N)$, both player $i$ and player $j$ receive payoff $\mat{A}^{ij}_{a_ia_j}$ from that edge. The total utility of player $i$ is the sum of its incident edge payoffs.

Let $x_i\in[0,1]$ be player $i$'s probability to play action $1$. For a mixed profile $\vx = (x_1, \dots, x_N)$, we write $u_i(\vx)$ for the expected payoff. Our focus is on (mixed) Nash equilibria~\citep{Nash50:Non}.

\begin{definition}
A strategy profile $\vx \in [0,1]^N$ is a \emph{Nash equilibrium} of a binary-action network coordination game if for every player $i \in [N]$ and action $a_i' \in \{0, 1\}$, $u_i(\vx) \geq u_i(a_i', \vx_{-i})$.
\end{definition}

Given a joint (possibly mixed) strategy profile $\vx \in [0, 1]^N$, the bilinear potential is given by
\begin{equation*}
 \Phi_{\calG}(\vx)=\sum_{(i, j)\in E}\Big[
 (1-x_i)(1-x_j) \mat{A}^{ij}_{00}+(1-x_i)x_j \mat{A}^{ij}_{01}
 +x_i(1-x_j)\mat{A}^{ij}_{10}+x_ix_j \mat{A}^{ij}_{11}\Big].
 %\label{eq:game-potential}
\end{equation*}

We next point out a well-known equivalence between Nash equilibria in potential games and KKT points. For completeness, we include the simple proofs.

\begin{lemma}
\label{lem:potential-game}
Let $\calG$ be a binary-action network coordination game. A joint strategy profile $\vx\in[0,1]^N$ is a Nash equilibrium of $\calG$ if and only if it is a KKT point of the problem
\begin{align*}
    \min_{\vx\in[0,1]^N} -\Phi_{\calG}(\vx).
\end{align*}
\end{lemma}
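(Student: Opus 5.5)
The plan is to show that, for each player $i$, the Nash condition and the KKT condition for coordinate $i$ are the same statement about the sign of one number: $\partial_{x_i}\Phi_{\calG}(\vx)$, the payoff gain of action $1$ over action $0$.

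\textbf{Step 1: utilities are affine in a player's own strategy.} First I will show that $u_i(\vx)$ is affine in $x_i$. Player $i$'s expected payoff is the sum, over incident edges $(i,j)$, of the expected payoff of that edge, which is bilinear in $(x_i,x_j)$. Hence
\[
u_i(\vx)=u_i(0,\vx_{-i})+x_i\bigl(u_i(1,\vx_{-i})-u_i(0,\vx_{-i})\bigr).
\]
Write $g_i(\vx)\defeq u_i(1,\vx_{-i})-u_i(0,\vx_{-i})$, which does not depend on $x_i$.

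\textbf{Step 2: the potential has the same marginal gain as the utility.} Next I will check that $\partial_{x_i}\Phi_{\calG}(\vx)=g_i(\vx)$. The terms of $\Phi_{\calG}$ that involve $x_i$ are exactly the edges incident to $i$. Each such term is the common expected payoff on that edge, and it enters $u_i$ identically, because the payoff matrices are shared by both endpoints. Terms on edges not incident to $i$ do not depend on $x_i$. Differentiating therefore gives $g_i$. This identity is the only place where the identical-interest structure is used.

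\textbf{Step 3: Nash condition as a sign condition.} By affinity, $u_i(\vx)\ge u_i(a_i',\vx_{-i})$ for both $a_i'\in\{0,1\}$ is equivalent to:
\begin{itemize}
\item $x_i>0 \Rightarrow g_i(\vx)\ge 0$, since otherwise moving to $a_i'=0$ strictly improves by $x_i\,|g_i|$;
\item $x_i<1 \Rightarrow g_i(\vx)\le 0$, since otherwise moving to $a_i'=1$ strictly improves by $(1-x_i)\,g_i$.
\end{itemize}
Conversely, if both implications hold, then $u_i(\vx)=\max\{u_i(0,\vx_{-i}),u_i(1,\vx_{-i})\}$, which is exactly the Nash condition for player $i$.

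\textbf{Step 4: match with the KKT definition.} For minimizing $-\Phi_{\calG}$, the relevant gradient is $F_i=-\partial_{x_i}\Phi_{\calG}=-g_i$. The conditions of Definition~\ref{def:kkt} read $x_i>0\Rightarrow -g_i\le 0$ and $x_i<1\Rightarrow -g_i\ge 0$. These coincide with the conditions in Step 3, which completes the equivalence.

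I do not expect a real obstacle. The only points needing care are the sign bookkeeping, since we minimize $-\Phi_{\calG}$ rather than maximize $\Phi_{\calG}$, and stating Step 2 precisely.
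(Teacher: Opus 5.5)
Your proposal is correct and follows the same route as the paper: both identify $\partial_{x_i}\Phi_{\calG}(\vx)$ with the payoff gain $u_i(1,\vx_{-i})-u_i(0,\vx_{-i})$ and then read the Nash and KKT conditions as the same sign conditions on this quantity. You spell out the affinity of $u_i$ in $x_i$ and the converse direction, both of which the paper leaves implicit.
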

\begin{proof}
We have $-\partial_i \Phi_\calG(\vx) = u_i(0,\vx_{-i}) -u_i(1,\vx_{-i})$. At a Nash equilibrium, player $i$ can assign positive probability to action $1$ only if $-\partial_i \Phi_\calG(\vx) \le 0$, and to action $0$ only if $-\partial_i \Phi_\calG(\vx) \ge 0$. Conversely, it follows that the KKT conditions are also sufficient to ensure the Nash equilibrium condition.
\end{proof}

\begin{lemma}
\label{lem:game-realization}
Consider a bilinear quadratic polynomial $p: [0, 1]^N \ni \vz \mapsto \vbeta^\transpose \vz + \sum_{i<j}w_{ij}z_iz_j$ with rational coefficients. There is a polynomial time construction of a
binary-action network coordination game with $N+1$ players and edge payoffs in
$[0,1]$ such that every Nash equilibrium of the constructed game induces a KKT point of $p$.
\end{lemma}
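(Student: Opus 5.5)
We plan to realize $-p$, up to a positive scaling and an additive constant, as the potential $\Phi_\calG$ of a game, and then apply \Cref{lem:potential-game}. Positive rescaling and adding constants preserve KKT points, so a Nash equilibrium of $\calG$, being a KKT point of $\min -\Phi_\calG$, will be a KKT point of $p$. We use $N$ players for the variables $z_1,\dots,z_N$. The $(N+1)$-th player is a \emph{dummy} player $0$ whose role is to carry the linear terms $\vbeta^\transpose\vz$ as pairwise interactions.

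\textbf{Quadratic terms.} For each pair $i<j$ with $w_{ij}\neq 0$, we add an edge $(i,j)$ whose matrix contributes $-c\,w_{ij}x_ix_j$ plus terms linear in $x_i,x_j$ and constants, where $c>0$ is a common scaling. For $w_{ij}<0$ we take $\mat{A}^{ij}_{11}=c|w_{ij}|$ and all other entries $0$. For $w_{ij}>0$ we use the identity
\[
-x_ix_j=(1-x_i)x_j-x_j,
\]
that is, we set $\mat{A}^{ij}_{01}=c\,w_{ij}$ and all other entries $0$. The leftover linear term $-c\,w_{ij}x_j$ is recorded as a correction to the linear coefficient of $x_j$. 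After this step, $\Phi_\calG$ contains exactly $-c\sum_{i<j} w_{ij}x_ix_j$ plus a known linear part.

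\textbf{Linear terms.} Let $\gamma_i$ denote the total desired linear coefficient of $x_i$ in $\Phi_\calG$, namely $-c\beta_i$ plus the corrections from the previous step. For each $i$, we add an edge $(0,i)$ whose payoff depends only on $a_i$: if $\gamma_i\ge 0$, set $\mat{A}^{0i}_{a1}=\gamma_i$ for both $a\in\{0,1\}$; if $\gamma_i<0$, set $\mat{A}^{0i}_{a0}=|\gamma_i|$, which gives $|\gamma_i|(1-x_i)=-|\gamma_i|x_i+\text{const}$. These edge payoffs do not depend on the dummy player's action, so player $0$ is always indifferent and $\partial_{x_0}\Phi_\calG\equiv 0$. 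Hence every $x_0$ satisfies its KKT condition, and the KKT conditions on the remaining coordinates are exactly those of $-\Phi_\calG$ restricted to $\vz$. Equivalently, we could have attached the linear terms directly to the edges, but the dummy player makes this clean even for isolated variables.

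\textbf{Scaling and conclusion.} We choose $c=1/\max\bigl(\max_{i<j}|w_{ij}|,\ \max_i|\gamma_i/c|\bigr)$, which is computable since each $\gamma_i/c=-\beta_i-\sum_{k<i,\,w_{ki}>0} w_{ki}$ is independent of $c$. With this choice all entries are rationals in $[0,1]$ of polynomial bit length. Then $-\Phi_\calG(\vx)=c\,p(\vx_{1:N})+\text{const}$, and \Cref{lem:potential-game} gives the claim.

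The main point requiring care is this sign and scaling bookkeeping: entries must be nonnegative while the coefficients $w_{ij}$ and $\beta_i$ may have either sign. This is exactly why we use the complemented monomials $(1-x_i)x_j$ and $(1-x_i)$, whose side effects are additional linear terms that the dummy edges absorb.
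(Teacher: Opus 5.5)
Your proof is correct and follows the paper's construction: pairwise edges carry the $w_{ij}$ terms, a dummy player $0$ whose edges $(0,i)$ ignore $a_0$ carries the linear terms, and the payoffs are then scaled into $[0,1]$. The only difference is how signs are handled. The paper uses payoffs $-w_{ij}a_ia_j$ and $-\beta_i a_i$ directly and then rescales; this implicitly shifts each edge matrix by a constant, which adds an action-independent term to each endpoint's utility and so preserves best responses. Your complemented monomials and linear corrections instead make every entry nonnegative explicitly, so a single positive scale factor suffices (take $c=1$ in the degenerate case $p\equiv 0$, where your formula for $c$ divides by zero).
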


\begin{proof}
For each pair $i<j$ with $w_{ij}\neq 0$, we define the common payoff on edge $(i,j)$ to be
$-w_{ij}a_ia_j$ when the endpoints choose actions $a_i,a_j\in \{0,1\}$. To realize the linear terms $\vbeta^\top \vz$, we add one additional
player $0$. On edge $(i,0)$ the common payoff is $-\beta_i a_i$,
independent of $a_0$. For each original player $i \in [N],$ the payoff difference is $u_i (1, \vz_{-i}) - u_i(0, \vz_{-i}) = -\beta_i-\sum_{j\ne i}w_{ij}z_j=-\partial_i p(\vz)$, with $w_{ji} =w_{ij}$. The best-response conditions are therefore exactly the KKT conditions for minimizing $p$. Finally, rescaling the edge payoffs to $[0, 1]$ preserves all best responses. The constructed game has $N+1$ players, at most $O(N^2)$ edges, and payoff entries whose encoding lengths are polynomially bounded in the description of $p$.
\end{proof}

% END sections/03_games_and_membership.tex

\subsection{Network congestion games}
\label{sec:congestion-def}

A \emph{network congestion game} is given by a directed graph $G=(V,E)$, a set of $N$ players, and a latency function $\ell_e$ for every edge $e\in E$. Each player $i\in[N]$ has a source $s_i\in V$ and a destination $t_i\in V$. We write $\calP_i$ for the nonempty set of all simple $s_i$-$t_i$ paths, which is player $i$'s set of actions.

At a pure strategy profile $\boldsymbol{P}=(P_1,\ldots,P_N)$, where $P_i\in\calP_i$, the load on an edge is the number of players using it, that is,
\[
 k_e(\boldsymbol{P})\defeq \bigl|\{i\in[N]:e\in P_i\}\bigr|.
\]
Each player incurs as cost the sum of the latencies along its chosen path,
\[
 \mathrm{cost}_i(\boldsymbol{P})
 \defeq \sum_{e\in P_i}\ell_e\bigl(k_e(\boldsymbol{P})\bigr).
\]
(To stay consistent with existing work in network congestion games, unlike network coordination games, here we take a cost minimization perspective.) We consider nonnegative affine latencies $\ell_e(k)=\alpha_e k+\beta_e$, where
$\alpha_e,\beta_e\in\Q_{\geq0}$. A latency is \emph{linear} if $\beta_e=0$, so that $\ell_e(k)=\alpha_e k$.

A mixed strategy $\vx_i$ is a probability distribution over $\calP_i$. Under a strategy profile $\vx=(\vx_1,\ldots,\vx_N)$, players draw their paths independently. We write $\mathrm{cost}_i(\vx)$ for
player $i$'s expected cost, and $\mathrm{cost}_i(P_i',\vx_{-i})$ for its expected cost when it instead chooses the fixed path $P_i'\in\calP_i$.

\begin{definition}
\label{def:congestion-ne}
A joint strategy profile $\vx$ is a \emph{Nash equilibrium} of a network congestion game if for every player $i \in[N]$ and path $P_i'\in\calP_i$,
\[
 \mathrm{cost}_i(\vx)
 \leq \mathrm{cost}_i(P_i',\vx_{-i}).
\]
\end{definition}

% BEGIN sections/04_boolean_alternative.tex
\section{A discrete local search alternative}
\label{sec:boolean-alternative}
\label{sec:boolean-def}

Our main reduction relies on the observation that one can recover a solution to the \QKKT instance through a discrete local search problem defined by a Boolean objective $h$. In particular, we construct $h$ so that every local minimum can be decoded into an exact KKT point of the original quadratic objective $p$. %Later in the construction, we will show that when the control variables take Boolean values, they induce a local minimum of $h$.

For a Boolean vector $\vb\in\{0,1\}^n$, we let $\vb^{\oplus q}$ denote the vector obtained
by flipping bit $q$. We introduce a bilinear objective with integer coefficients
\begin{equation*}
 h(\vb)=\sum_{q=1}^n\beta_q b_q+\sum_{q<r}w_{qr}b_qb_r.
\end{equation*}
A Boolean local minimum satisfies
$h(\vb^{\oplus q})\geq h(\vb)$ for every $q$.
Similar to \eqref{eq:source}, we set
$w_{rq}=w_{qr}$ and $w_{qq}=0,$ and use the same polynomial to define $h(\vu)$ over the entire real domain $\R^n$. We write
\begin{equation*}
 \nabla h(\vu)_q\defeq \frac{\partial h}{\partial u_q}(\vu)
       =\beta_q+\sum_{r\ne q}w_{qr}u_r.
\end{equation*}
Since $h$ is affine in each coordinate separately,
\begin{equation}
 h(\vb^{\oplus q})-h(\vb)=(1-2b_q)\nabla h(\vb)_q.
 \label{eq:flip}
\end{equation}
Since all the coefficients in $h$ are integers, the derivative $\nabla h(\vu)_q$ is an integer at every Boolean assignment.

We now construct a discrete local search instance from the original quadratic objective.

\begin{lemma}
\label{lem:grid-pls}
Given a quadratic objective $p$ of a \QKKT instance over $[0, 1]^m$, one can construct in polynomial time a discrete local search instance such that every local optimum can be converted in polynomial time into an exact KKT point of $p$.
\end{lemma}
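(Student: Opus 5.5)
We will build the grid local search instance suggested in the technical overview. The plan has three steps: choose the grid, bound the derivative error at a grid local minimum, and then round.

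\textbf{Choosing the grid.} First we invoke \Cref{lem:exact-recovery} to compute, in polynomial time, an $\epsilon>0$ such that any $\epsilon$-KKT point of $p$ can be rounded to an exact KKT point in polynomial time. Let $M$ be an upper bound on $\max_i \sum_j |\mat{Q}_{ij}|$; its encoding length is polynomial. Choose $\delta = 1/K$ with $K$ a power of two satisfying $K \geq M/\epsilon$. Then $K$ has polynomially many bits.

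\textbf{The local search instance.} The solution space is the grid $\{0,\delta,2\delta,\ldots,1\}^m$. A point is encoded by $m$ integers $k_i\in\{0,\ldots,K\}$, each written in binary with $\log_2 K+1$ bits. Strings that do not encode a valid grid point are mapped to a fixed default point, which can be checked in polynomial time. The cost is $p(\vz)$, which is rational and computable exactly; multiplying by a common denominator makes it integral. The neighbours of $\vz$ are the points $\vz\pm\delta\ve_i$ that stay in $[0,1]^m$. There are at most $2m$ of them, so the neighbourhood is polynomial-size and a PLS instance results.

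\textbf{Local minima are approximate KKT points.} Let $\vz$ be a local minimum and suppose $z_i<1$. Since $p$ is quadratic, $p(\vz+\delta\ve_i)-p(\vz)=\delta F_i(\vz)+\tfrac12\mat{Q}_{ii}\delta^2 \ge 0$. Using $|\mat{Q}_{ii}|\le 1$, this gives $F_i(\vz)\ge -\delta/2\ge -\epsilon$. Symmetrically, if $z_i>0$ then $-\delta F_i(\vz)+\tfrac12\mat{Q}_{ii}\delta^2\ge0$, so $F_i(\vz)\le \delta/2\le\epsilon$. Hence $\vz$ is an $\epsilon$-KKT point. Note that the second-order term is the only error, so the condition $\delta\le 2\epsilon$ would in fact suffice, and $M$ is not needed.

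\textbf{Decoding.} Applying the rounding of \Cref{lem:exact-recovery} to $\vz$ yields an exact KKT point in polynomial time.

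The main point of care is the encoding: every bit string must correspond to a feasible solution, and the cost and neighbourhood must be integer-valued and polynomial-time computable. The mapping of invalid encodings handles the first requirement, and scaling by the common denominator handles integrality. If a Boolean form is needed for the later Sch\"affer--Yannakakis step~\citep{SY1991}, that step consumes a general PLS instance, so this encoding suffices.
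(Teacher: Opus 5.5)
Your proposal is correct and follows essentially the same route as the paper: a uniform grid with $\pm\delta\ve_i$ neighbourhoods, the exact second-order expansion showing that a local optimum violates each KKT inequality by at most $\delta/2$ (using $|\mat{Q}_{ii}|\le 1$), and then the rounding of \Cref{lem:exact-recovery}. The differences are cosmetic. The paper takes $\delta=1/(2^k-1)$ so that every $k$-bit block encodes a valid grid coordinate, which avoids your default-point convention for invalid strings. It also simply requires $\delta/2\le\epsilon$, which, as you observe yourself, makes your row-sum bound $M$ superfluous.
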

\begin{proof}
By \Cref{lem:exact-recovery}, we can compute a rational tolerance $\epsilon>0$ with polynomial encoding length such that any $\epsilon$-KKT point of $p$
can be converted into an exact KKT point.

Fix some tolerance $\epsilon>0$. For each coordinate, consider a uniform grid on $[0,1]$ with $M+1$ points and grid spacing $\delta=1/M$, where $M=2^k-1$ for some integer $k\ge 1$ chosen so that $\delta/2\le \epsilon$. Since $\epsilon$ has polynomial encoding length, one can choose $k=O(\log(1/\epsilon))$, which is polynomially bounded in the description length of $p$.

We then consider the following discrete local search problem over this grid. A feasible solution is represented by a bit string of length $mk$, partitioned into $m$ blocks of $k$ bits. The $i$-th block encodes an integer $j_i\in \{0,\ldots,M\}$, corresponding to the grid coordinate $z_i=j_i/M$. Thus, every feasible solution encodes a valid grid point
$
\vz=\left(\frac{j_1}{M},\ldots,\frac{j_m}{M}\right) \in [0, 1]^m.
$
We define two grid points to be neighbors if they only differ by $\pm\delta \ve_i$ for one coordinate $i\in[m]$. Each grid point has at most $2m$ neighbors.

For any grid point $\vz$, use $p(\vz)$ as the objective value we aim to minimize, so that a neighboring point $\vz'$ is improving whenever $p(\vz')<p(\vz)$. Now, let $\vz$ be a local optimum with respect to this neighborhood. Since
$
\vF(\vz) \defeq \nabla p(\vz)= \mat{Q}\vz+\vc,
$
if $\vz+\delta \ve_i$ is a feasible point, local optimality implies
\begin{align*}
    0
\le
p(\vz+\delta \ve_i)-p(\vz)
=
\delta F_i(\vz)+\frac{\delta^2}{2}\mat{Q}_{ii}.
\end{align*}
Similarly, if $\vz-\delta \ve_i$ is feasible, then
\begin{align*}
    0
\le
p(\vz-\delta \ve_i)-p(\vz)
=
-\delta F_i(\vz)+\frac{\delta^2}{2}\mat{Q}_{ii}.
\end{align*}
If $z_i>0$, the neighboring grid point $\vz-\delta\ve_i$ is feasible. Using local optimality and $|\mat{Q}_{ii}|\le 1$, we obtain
$F_i(\vz)\le \delta/2 \le \epsilon.$
Similarly, if $z_i < 1$, it holds that
$F_i(\vz)\ge -\delta/2\ge -\epsilon.$
Hence, $\vz$ is an $\epsilon$-KKT point of $p$, and \cref{lem:exact-recovery} completes the recovery of an exact KKT point.

Each grid point is represented using only $mk$ bits, and the $2m$ neighbors can be enumerated in polynomial time. Finally, to cast this discrete local search problem in the standard \PLS form, we rescale $p(\vz)$ to an equivalent nonnegative integer-valued objective. Let $D$ be a common positive denominator of the entries of $\mat{Q}$ and $\vc$. Writing $\vz=\vj/M$, we have
$2DM^2p(\vz) = D\vj^\transpose \mat{Q}\vj + 2DM\vc^\transpose\vj \in\mathbb Z$. Multiplying by $2DM^2>0$ and adding a sufficiently large integer constant preserves all improving moves while producing a nonnegative objective with integer values. Hence, our discrete local search problem admits a polynomial-size representation and polynomial time neighborhood evaluation, as required for \PLS.
\end{proof}

\begin{lemma}
\label{lem:boolean-alternative}
Given a rational quadratic objective $p$ over $[0,1]^m$, one can construct in polynomial time a bilinear polynomial $h$ over the discrete set $\{0,1\}^n$ with integer coefficients, together with a polynomial time decoder $\mathcal D_p$, such that
\begin{equation*}
 \vb\text{ is a local minimum of }h
 \quad\Longrightarrow\quad
 \mathcal D_p(\vb)\text{ is an exact KKT point of }p.
\end{equation*}
\end{lemma}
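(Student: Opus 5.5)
We will chain \Cref{lem:grid-pls} with the classical \PLS-completeness of \MaxCut{} under the flip neighborhood due to \citet{SY1991}. Flip-local optima of a max-cut instance are exactly local minima of a bilinear Boolean polynomial with integer coefficients.

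\Cref{lem:grid-pls} yields, in polynomial time, a \PLS{} instance whose local optima decode into exact KKT points of $p$. Its feasible solutions are bit strings of length $mk$, its neighborhood consists of the $\pm\delta\ve_i$ grid moves, and its objective is a nonnegative integer. By \PLS-completeness of \MaxCut{} under the flip neighborhood \citep{SY1991}, which goes through circuit-flip and uses tight \PLS{} reductions, there is a polynomial-time map from this instance to a weighted graph $H=(V_H,E_H)$ with integer weights $c_{qr}$. The same result supplies a polynomial-time map $\psi$ sending every flip-local maximum cut of $H$ to a local optimum of the grid instance. We set $n=|V_H|$ and encode a cut by $\vb\in\{0,1\}^n$.

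The cut weight is
\[
\sum_{(q,r)\in E_H}c_{qr}(b_q+b_r-2b_qb_r).
\]
This expression is bilinear in $\vb$ and has integer coefficients. We define $h(\vb)$ to be minus the cut weight, so that
\[
\beta_q=-\sum_{r}c_{qr}, \qquad w_{qr}=2c_{qr}.
\]
A Boolean local minimum of $h$, meaning $h(\vb^{\oplus q})\ge h(\vb)$ for all $q$, is then exactly a cut that no single vertex flip improves, i.e., a flip-local maximum of \MaxCut. The decoder is the composition
\[
\mathcal D_p(\vb)=\mathrm{Rec}\bigl(\psi(\vb)\bigr),
\]
where $\psi$ returns a local optimum $\vz$ of the grid instance and $\mathrm{Rec}$ is the rounding procedure of \Cref{lem:grid-pls}, via \Cref{lem:exact-recovery}, that turns $\vz$ into an exact KKT point. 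Each step runs in polynomial time, so $\mathcal D_p$ does too.

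The main obstacle is making sure the black-box \PLS{} reduction actually provides a polynomial-time solution map from \emph{every} flip-local maximum back to a local optimum of the grid problem. This is what \PLS-reductions guarantee by definition, so correctness relies only on citing \citet{SY1991} for \PLS-hardness of \MaxCut{} with flip neighborhood.

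A minor subtlety is the format \PLS{} requires. The grid instance of \Cref{lem:grid-pls} is a bona fide \PLS{} problem: the solution set is all $mk$-bit strings, which is trivially recognizable; there is a polynomial-time initial solution, e.g.\ all zeros; the cost is computable in polynomial time; and a neighborhood-improvement oracle exists. Since it meets these requirements, the generic reduction from any \PLS{} problem to \MaxCut{} applies directly.
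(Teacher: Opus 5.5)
Your proposal is correct and follows the paper's proof essentially step for step. You compose \Cref{lem:grid-pls} with the \PLS{} reduction of \citet{SY1991} to weighted \MaxCut, take $h$ to be the negative cut weight, and define $\mathcal D_p$ as the \PLS{} solution map followed by the recovery of \Cref{lem:grid-pls}. Your explicit coefficients $\beta_q=-\sum_r c_{qr}$, $w_{qr}=2c_{qr}$, and your check that the grid instance is in proper \PLS{} format (every $mk$-bit string is feasible since $M=2^k-1$), are details the paper leaves implicit, and they are accurate.
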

\begin{proof}
By \cref{lem:grid-pls}, given any \QKKT instance $p$, we can construct a discrete local search problem such that every local optimum can be decoded into an exact KKT point of $p$. We apply the standard PLS reduction of \citet{SY1991} to transform this discrete local search problem into an instance of weighted \MaxCut.

A cut can be represented by a Boolean vector $\vb\in\{0,1\}^n$. We define $h$ to be the negative cut weight of the resulting graph, that is,
\begin{equation*}
    h(\vb) = -\sum_{(q,r) \in E} w_{qr}
    \bigl(b_q+b_r-2b_qb_r\bigr).
\end{equation*}
$h$ is a bilinear polynomial with integer coefficients. Moreover, since flipping a single coordinate of $\vb$ corresponds to moving one vertex to the opposite side of the cut, $\vb$ is a local minimum of $h$ if and only if it represents a locally maximal cut.
By this reduction and \cref{lem:grid-pls}, every local minimum of $h$ can be converted in polynomial time into an exact KKT point of $p$, yielding the desired polynomial-time decoder $\mathcal D_p$.
\end{proof}

% BEGIN sections/05_reduction.tex
\section{The main reduction}
\label{sec:reduction}

We now give the reduction from \QKKT\ to \SFKKT. Starting from the quadratic objective $p$ in \eqref{eq:source}, normalized so that $|\mat{Q}_{ii}|\leq 1$ for every $i\in[m]$, let $h$ and $\mathcal D_p$ be obtained from \cref{lem:boolean-alternative}. We construct a bilinear potential $\Phi$ by introducing, for each Boolean coordinate $q$ of $h$, a bilinear representation of the source objective $p$ together with a pair of what we call \emph{control variables}. The construction is designed so that every KKT point of $\Phi$ can be decoded in one of two ways: either directly into a KKT point of $p$, or into a local minimum of $h$, which can then be mapped to a KKT point of $p$ through $\mathcal D_p$.

% Let $p$ be as in \eqref{eq:source}, normalized to satisfy
% $|\mat{Q}_{i,i}| \leq 1$ for all $i$. Let $h$ and $\mathcal D_p$ be given by
% \cref{lem:boolean-alternative}. The construction below uses only this
% lemma's decoding property, the quadratic bilinear form of $h$, and its
% integer coefficients.

\subsection{Bilinear construction}
\label{sec:construction}

For every $q\in[n]$, corresponding to the $q$-th Boolean coordinate of $\vb$, we introduce $2m+2$ independent variables
\begin{align*}
    \vx_q, \vy_q\in[0,1]^m\quad  \text{and} \quad s_q,s_q'\in[0,1].
\end{align*}
We refer to these variables collectively as \emph{block} $q$. The variables $s_q,s_q'$ serve to control block $q$. As we shall see, their average will encode the value associated with bit $q$ and determine whether the block is in the Boolean or fractional case. For each block $q$, we define the averages and differences
\begin{equation*}
 \vz_q\defeq \frac{\vx_q+\vy_q}{2},\quad
 \vd_q\defeq \vx_q-\vy_q,\quad
 \bar{s}_q\defeq \frac{s_q+s_q'}{2}.
\end{equation*}
Recalling that $w_{qr}$ denotes the integer coefficient in $h$, we choose
\begin{equation*}
 W=1+\max_{q\in[n]}\sum_{r\ne q}|w_{qr}|
\end{equation*}
and define $
 \tau=1/{(4mW)}.$
Finally, we define $\shiftedbit_q \defeq \bar{s}_q - \tau \sum_{j=1}^m d_{q,j}.$
The vector $\vshiftedbit = (\shiftedbit_1,\ldots,\shiftedbit_n)$ is the point at which $h$ is evaluated.
For every coordinate $q$, we have $|\sum_{j=1}^m d_{q,j}|\leq m$ and hence
\begin{equation}
 |\shiftedbit_q-\bar{s}_q|\leq\frac{1}{4W}.
 \label{eq:shift-bound}
\end{equation}
Since we extend $h$ from $\{0,1\}^n$ to $\mathbb R^n$ using the same polynomial, it remains well defined even if some coordinate $\shiftedbit_q$ lies outside $[0,1]$.

The reason why we shift the input bit $\shiftedbit_q$ from $\bar{s}_q$ is to incorporate the differences between $\vx_q$ and $\vy_q$ into the objective $h$. This allows the gradient of $h$ to appear directly in the derivatives of $\vx_q$ and $\vy_q$. The resulting terms can then be aligned with the derivatives with respect to the control variables $s_q$ and $s_q'$ through a common residual $R_q$, defined in~\eqref{eq:R}. At the same time, the choice of $\tau$ ensures that $\shiftedbit_q$ remains close to $\bar{s}_q$, which will allow us to preserve the local search information of $h$ when $s_q$ and $s_q'$ are Boolean variables.

We next construct a bilinear representation of the source quadratic objective $p$. The two copies $\vx_q$ and $\vy_q$ allow each squared term of $p$ to be replaced by a product between distinct variables, while preserving the source objective when $\vx_q$ = $\vy_q$. Specifically, we define
\begin{align}
 P(\vx_q,\vy_q)
&\defeq
\sum_i \mat{Q}_{ii}x_{q,i} y_{q, i}
+\frac12\sum_{i<j}\mat{Q}_{ij}(x_{q, i}+y_{q, i})(x_{q, j}+y_{q,j})
+\sum_i c_i(x_{q, i}+y_{q, i}).\label{eq:H0}
\end{align}
This guarantees that $P(\vz_q,\vz_q)=2p(\vz_q)$. However, $P$ alone does not guarantee that the two copies $\vx_q$ and $\vy_q$ agree. In particular, we have
$
\partial_{x_{q,i}}P(\vx_q,\vy_q)-\partial_{y_{q,i}}P(\vx_q,\vy_q)
=
-\mat{Q}_{ii}d_{q,i}.
$ When $\mat{Q}_{ii} > 0$, this difference in the derivatives drives $x_{q,i}$ and $y_{q,i}$ further apart. We therefore add an additional bilinear term so that the coefficient of $d_i$ in the resulting derivative difference is always positive.

Specifically, we define
\begin{align}
 \widehat{P}(\vx_q,\vy_q)
&\defeq
P(\vx_q,\vy_q)
-\sum_{i<j}(x_{q,i}-y_{q,i})(x_{q,j}-y_{q,j}).
 \label{eq:B}
\end{align}
As we show in \Cref{lem:derivatives}, this additional bilinear term introduces a positive $d_{q,i}$ term in the difference between the derivatives with respect to $x_{q,i}$ and $y_{q,i}$, which drives the two copies toward agreement. To conclude the construction, we let
$
\vx=(\vx_1,\ldots,\vx_n)\in[0,1]^{mn},
\vy=(\vy_1,\ldots,\vy_n)\in[0,1]^{mn},
$
and $\vs,\vs'\in[0,1]^n$ denote the control variables. We define the bilinear potential
\begin{equation}
 \Phi(\vx,\vy,\vs,\vs')
\defeq
h(\vshiftedbit)
+\sum_{q=1}^n
\left(
\frac{\tau^2}{4}\widehat{P}(\vx_q,\vy_q)
+\frac{\tau}{4}\bar{s}_q\sum_{j=1}^m d_{q,j}
-\frac18 s_q s_q'
\right),
\label{eq:Phi}
\end{equation}
which is to be minimized. The components in $\Phi$ are chosen so that its derivatives have a particular structure. The component $h(\vshiftedbit)$ incorporates the Boolean local search objective into the derivatives of $\vx_q$ and $\vy_q$, while $\widehat P(\vx_q,\vy_q)$ contributes the derivatives of the original objective together with the positive $d_{q,i}$ term discussed above. The third component combines the remaining terms from the first two components into a single scalar residual $R_q$ that appears in both the derivatives of $\vx_q,\vy_q$ and those of $s_q, s'_q$.  In the fractional case, the KKT conditions force this residual to vanish, which in turn drives $\vx_q=\vy_q$ (\Cref{lem:fractional}). In the Boolean case, the KKT conditions on this residual ensure that flipping any Boolean coordinate cannot decrease $h$ (\Cref{lem:boolean}). Finally, the last component $-s_qs_q'/8$ drives the two control variables together at every KKT point (\Cref{lem:control-agreement}). Accordingly, when $\Phi$ is converted into a network coordination game, $-\Phi$ serves as the payoff potential of the game (per~\Cref{lem:game-realization}).

\paragraph{Derivatives.} For each bit $q\in[n]$, we define the residual
\begin{equation}
 R_q\defeq \nabla h(\vshiftedbit)_q-\frac{\shiftedbit_q}{4}.
 \label{eq:R}
\end{equation}
We now calculate the derivatives of $\Phi$.
\begin{lemma}
\label{lem:derivatives}
For every $q\in[n]$ and $i\in[m]$,
\begin{alignat*}{2}
\partial_{x_{q,i}} \Phi
&=\frac{\tau^2}{4}\Biggl(F_i(\vz_q)+\left(1-\frac{\mat{Q}_{ii}}{2}\right)d_{q,i}\Biggr)-\tau R_q,
\qquad &\partial_{s_q} \Phi
&=\frac12R_q+\frac{s_q-s_q'}{16},\\
\partial_{y_{q,i}} \Phi
&=\frac{\tau^2}{4}\Biggl(F_i(\vz_q)-\left(1-\frac{\mat{Q}_{ii}}{2}\right)d_{q,i}\Biggr)+\tau R_q,
\qquad &\partial_{s_q'} \Phi
&=\frac12R_q-\frac{s_q-s_q'}{16}.
\end{alignat*}
\end{lemma}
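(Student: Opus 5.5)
The proof is a direct chain-rule computation. I would handle the four summands of $\Phi$ in \eqref{eq:Phi} separately and then regroup the terms so that they assemble into $F_i(\vz_q)$ and the residual $R_q$ from \eqref{eq:R}. First I would record the partial derivatives of the auxiliary quantities. We have $\partial_{x_{q,i}}\shiftedbit_q=-\tau$, $\partial_{y_{q,i}}\shiftedbit_q=+\tau$ and $\partial_{s_q}\shiftedbit_q=\partial_{s_q'}\shiftedbit_q=\tfrac12$. Every $\shiftedbit_r$ with $r\neq q$ is independent of the block-$q$ variables, because $\shiftedbit_r$ involves only $s_r,s_r',\vd_r$. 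Hence $\partial_{x_{q,i}}h(\vshiftedbit)=-\tau\nabla h(\vshiftedbit)_q$, $\partial_{y_{q,i}}h(\vshiftedbit)=+\tau\nabla h(\vshiftedbit)_q$, and $\partial_{s_q}h(\vshiftedbit)=\partial_{s_q'}h(\vshiftedbit)=\tfrac12\nabla h(\vshiftedbit)_q$.

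Next I would differentiate $\widehat P$. From \eqref{eq:H0},
\[
\partial_{x_{q,i}}P=\mat{Q}_{ii}y_{q,i}+\tfrac12\sum_{j\neq i}\mat{Q}_{ij}(x_{q,j}+y_{q,j})+c_i .
\]
Writing $y_{q,i}=z_{q,i}-d_{q,i}/2$, this equals $F_i(\vz_q)-\tfrac{\mat{Q}_{ii}}{2}d_{q,i}$. The correction term in \eqref{eq:B} contributes $-\sum_{j\neq i}d_{q,j}=d_{q,i}-\sum_j d_{q,j}$. Together,
\[
\partial_{x_{q,i}}\widehat P=F_i(\vz_q)+\Bigl(1-\tfrac{\mat{Q}_{ii}}{2}\Bigr)d_{q,i}-\sum_j d_{q,j}.
\]
The $\vy_q$ case is the same computation with the signs of the $\vd_q$ terms flipped.

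The only delicate step is regrouping the remaining terms. The coupling term $\tfrac{\tau}{4}\bar s_q\sum_j d_{q,j}$ gives $\partial_{x_{q,i}}=\tfrac{\tau}{4}\bar s_q$. I would substitute $\bar s_q=\shiftedbit_q+\tau\sum_j d_{q,j}$, so this becomes $\tfrac{\tau}{4}\shiftedbit_q+\tfrac{\tau^2}{4}\sum_j d_{q,j}$. The piece $\tfrac{\tau^2}{4}\sum_j d_{q,j}$ cancels exactly the $-\tfrac{\tau^2}{4}\sum_j d_{q,j}$ coming from $\tfrac{\tau^2}{4}\widehat P$. The piece $\tfrac{\tau}{4}\shiftedbit_q$ combines with $-\tau\nabla h(\vshiftedbit)_q$ to give $-\tau\bigl(\nabla h(\vshiftedbit)_q-\shiftedbit_q/4\bigr)=-\tau R_q$. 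This yields the stated formula for $\partial_{x_{q,i}}\Phi$, and the formula for $\partial_{y_{q,i}}\Phi$ follows by symmetry.

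For the controls, the coupling term gives $\partial_{s_q}=\tfrac{\tau}{8}\sum_j d_{q,j}$, and $-\tfrac18 s_qs_q'$ gives $-\tfrac18 s_q'$. So
\[
\partial_{s_q}\Phi=\tfrac12\nabla h(\vshiftedbit)_q+\tfrac{\tau}{8}\sum_j d_{q,j}-\tfrac18 s_q'.
\]
I would check this against the target by expanding
\[
\tfrac12R_q+\tfrac{s_q-s_q'}{16}=\tfrac12\nabla h(\vshiftedbit)_q-\tfrac18\Bigl(\bar s_q-\tau\sum_j d_{q,j}\Bigr)+\tfrac{s_q-s_q'}{16}
\]
and using $-\tfrac18\bar s_q+\tfrac{s_q-s_q'}{16}=-\tfrac18 s_q'$, which shows the two expressions agree. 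The $s_q'$ case is symmetric. No step is a genuine obstacle; the main care needed is the substitution $\bar s_q=\shiftedbit_q+\tau\sum_j d_{q,j}$, which is what makes the $\tau^2$ cross terms cancel and the residual $R_q$ appear.
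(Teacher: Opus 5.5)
Your computation is correct and follows the paper's proof essentially step by step. You differentiate $\widehat P$ to get $F_i(\vz_q)\pm(1-\mat{Q}_{ii}/2)d_{q,i}\mp\sum_j d_{q,j}$, apply the chain rule through $\shiftedbit_q$, and substitute $\bar{s}_q=\shiftedbit_q+\tau\sum_j d_{q,j}$ to cancel the $\tau^2$ terms and assemble $R_q$; your check of the control derivatives via $-\tfrac18\bar{s}_q+\tfrac{s_q-s_q'}{16}=-\tfrac18 s_q'$ just spells out the paper's final step in more detail.
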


We draw attention to the fact that the same scalar residual $R_q$ appears in all four derivatives. In particular, it enters with opposite signs for $\vx_q$ and $\vy_q$, and as the common component for $s_q$ and $s_q'$.

\begin{proof}[Proof of~\Cref{lem:derivatives}]
By the symmetry of $\mat{Q}$ and the definition of $P$ in \eqref{eq:H0}, for any bit $q \in [n]$ and any $\vx_q,\vy_q\in[0,1]^m$, we have
\[
 \partial_{x_{q,i}}P(\vx_q,\vy_q)
=
F_i\left(\frac{\vx_q+\vy_q}{2}\right)
-\frac{\mat{Q}_{ii}}{2}(x_{q,i}-y_{q,i}).
\]
Recalling that $d_{q,i}=x_{q,i}-y_{q,i}$, differentiating $\widehat P$ gives
\begin{equation*}
\begin{aligned}
 \partial_{x_{q,i}}\widehat P(\vx_q,\vy_q)
&=F_i\left(\frac{\vx_q+\vy_q}{2}\right)+\left(1-\frac{\mat{Q}_{ii}}{2}\right)d_{q,i}-\sum_{j=1}^m d_{q,j},\\
\partial_{y_{q,i}}\widehat P(\vx_q,\vy_q)
&=F_i\left(\frac{\vx_q+\vy_q}{2}\right)-\left(1-\frac{\mat{Q}_{ii}}{2}\right)d_{q,i}+\sum_{j=1}^m d_{q,j}.
\end{aligned}
\end{equation*}
Among the coordinates of $\vshiftedbit$, only $\shiftedbit_q$ depends on $x_{q,i}$, and $\partial_{x_{q,i}} \shiftedbit_q=-\tau.$ Therefore,
\begin{align*}
\partial_{x_{q,i}} \Phi
&=
-\tau \nabla h(\vshiftedbit)_q
+\frac{\tau^2}{4}
\Biggl(F_i(\vz_q)+\left(1-\frac{\mat{Q}_{ii}}{2}\right)d_{q,i}-\sum_{j=1}^m d_{q,j}\Biggr)
+\frac{\tau}{4}\bar{s}_q\\
&=
\frac{\tau^2}{4}
\Biggl(F_i(\vz_q)+\left(1-\frac{\mat{Q}_{ii}}{2}\right)d_{q,i}\Biggr)
-\tau
\left(
\nabla h(\vshiftedbit)_q
-\frac{\bar{s}_q-\tau \sum_{j=1}^m d_{q,j}}{4}
\right)\\
&=
\frac{\tau^2}{4}
\Biggl(F_i(\vz_q)+\left(1-\frac{\mat{Q}_{ii}}{2}\right)d_{q,i}\Biggr)
-\tau R_q,
\end{align*}
where the last equality follows from the fact that $\shiftedbit_q=\bar{s}_q-\tau \sum_{j=1}^m d_{q,j}$ and the definition of $R_q$. The expression for $\partial_{y_{q,i}} \Phi$ follows analogously from $\partial_{y_{q,i}}\shiftedbit_q=\tau$.

For the control variables, $\partial_{s_q}\shiftedbit_q=1/2$, and hence
\[
 \partial_{s_q} \Phi
=
\frac12\nabla h(\vshiftedbit)_q
+\frac{\tau \sum_{j=1}^m d_{q,j}}{8}
-\frac{s_q'}{8}.
\]
Using the definition of $R_q$ together with the fact that $\shiftedbit_q=\bar{s}_q-\tau \sum_{j=1}^m d_{q,j}$ and $\bar{s}_q=(s_q+s_q')/2$, we obtain the desired quantity. The expression for $\partial_{s_q'} \Phi$ follows symmetrically.
\end{proof}

\Cref{lem:derivatives} immediately implies the following property at a KKT point.

\begin{lemma}
\label{lem:control-agreement}
At every KKT point of $\Phi$, we have $s_q=s_q'=\bar{s}_q$ for every $q\in[n]$.
\end{lemma}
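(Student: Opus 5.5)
We argue by contradiction, using only the two control derivatives from \Cref{lem:derivatives}. Fix a block $q$ and suppose, at a KKT point of $\Phi$, that $s_q \neq s_q'$. By the symmetry between $s_q$ and $s_q'$ in $\Phi$, it suffices to treat the case $s_q > s_q'$.

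In this case $s_q > 0$ and $s_q' < 1$. The KKT conditions for minimization (\Cref{def:kkt}, applied to $\Phi$) therefore give
\[
\partial_{s_q}\Phi \le 0 \quad\text{and}\quad \partial_{s_q'}\Phi \ge 0.
\]
Substituting the expressions from \Cref{lem:derivatives}, these read
\[
\tfrac12 R_q + \tfrac{s_q - s_q'}{16} \le 0 \quad\text{and}\quad \tfrac12 R_q - \tfrac{s_q - s_q'}{16} \ge 0.
\]
Subtracting the second inequality from the first yields $(s_q - s_q')/8 \le 0$, i.e.\ $s_q \le s_q'$. This contradicts $s_q > s_q'$. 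The case $s_q < s_q'$ is handled symmetrically: now $s_q < 1$ and $s_q' > 0$, so $\partial_{s_q}\Phi \ge 0 \ge \partial_{s_q'}\Phi$, and subtracting gives $s_q \ge s_q'$.

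Hence $s_q = s_q'$, and therefore $s_q = s_q' = \bar{s}_q$ by the definition of $\bar{s}_q$. This holds for every $q \in [n]$.

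The argument has no real obstacle. The only point to be careful about is which KKT inequality applies to each variable. This is determined by the strict inequality $s_q > s_q'$, which forces $s_q$ off the lower bound $0$ and $s_q'$ off the upper bound $1$. The unknown residual $R_q$ cancels in the subtraction, so nothing about it needs to be known.
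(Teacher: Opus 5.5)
Your proof is correct and matches the paper's argument: assuming $s_q>s_q'$ puts $s_q$ off $0$ and $s_q'$ off $1$. The two KKT inequalities, subtracted using \Cref{lem:derivatives}, then give $(s_q-s_q')/8\le 0$ with $R_q$ cancelling, and the reverse case is symmetric.
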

\begin{proof}
By \Cref{lem:derivatives}, we have
\[
 \partial_{s_q} \Phi -\partial_{s_q'} \Phi=\frac{s_q-s_q'}{8}.
\]
Suppose that $s_q>s_q'$. Then $s_q>0$ and $s_q'<1$, so by KKT conditions, it holds that $\partial_{s_q} \Phi \leq0$ and $
\partial_{s_q'} \Phi \geq0.$ Hence $\partial_{s_q} \Phi - \partial_{s_q'} \Phi \leq 0$, contradicting the fact that $s_q - s_q' > 0.$ The case where $s_q < s_q'$ is symmetric. Thus, we conclude that $s_q = s_q'$ at KKT points.
\end{proof}

\subsection{Fractional case}
\label{sec:fractional}

We proceed by analyzing what we call the fractional case, wherein $0<\bar{s}_q<1$ for some $q\in[n]$.

\begin{lemma}
\label{lem:fractional}
Let $(\vx,\vy,\vs,\vs')$ be a KKT point of $\Phi$. If $0<\bar{s}_q<1$ for some $q\in[n]$, then $\vx_q=\vy_q=\vz_q,$ and $\vz_q$ is an exact KKT point of $p$.
\end{lemma}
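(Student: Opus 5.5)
By \Cref{lem:control-agreement}, $s_q=s_q'=\bar s_q$. If $0<\bar s_q<1$, both control variables are interior. The KKT conditions then force $\partial_{s_q}\Phi=\partial_{s_q'}\Phi=0$. By \Cref{lem:derivatives}, each of these equals $\frac12 R_q$ once $s_q=s_q'$, so $R_q=0$.

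Substituting $R_q=0$ into \Cref{lem:derivatives}, the derivatives in block $q$ become
\[
\partial_{x_{q,i}}\Phi=\frac{\tau^2}{4}\Bigl(F_i(\vz_q)+\bigl(1-\tfrac{\mat{Q}_{ii}}{2}\bigr)d_{q,i}\Bigr),\qquad
\partial_{y_{q,i}}\Phi=\frac{\tau^2}{4}\Bigl(F_i(\vz_q)-\bigl(1-\tfrac{\mat{Q}_{ii}}{2}\bigr)d_{q,i}\Bigr).
\]
Their difference is $\frac{\tau^2}{2}(1-\mat{Q}_{ii}/2)\,d_{q,i}$, and the coefficient is strictly positive because $|\mat{Q}_{ii}|\le 1$.

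Next I show $\vx_q=\vy_q$ using the argument of \Cref{lem:control-agreement}. Suppose $d_{q,i}>0$. Then $x_{q,i}>0$ and $y_{q,i}<1$, so KKT gives $\partial_{x_{q,i}}\Phi\le 0\le\partial_{y_{q,i}}\Phi$. This makes the difference nonpositive, which contradicts its positivity. The case $d_{q,i}<0$ is symmetric. Hence $d_{q,i}=0$ for every $i$, that is, $\vx_q=\vy_q=\vz_q$.

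With $d_{q,i}=0$, both derivatives equal $\frac{\tau^2}{4}F_i(\vz_q)$, where $\tau^2/4>0$. If $z_{q,i}>0$, then $x_{q,i}>0$, so KKT for $x_{q,i}$ gives $F_i(\vz_q)\le 0$. If $z_{q,i}<1$, then $x_{q,i}<1$, so KKT gives $F_i(\vz_q)\ge 0$. These are exactly the conditions \eqref{eq:kkt} for $p$ at $\vz_q$.

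The main subtlety is the first step, where $R_q=0$ must be extracted. This needs the interiority of both $s_q$ and $s_q'$, which is why \Cref{lem:control-agreement} is applied first. The remaining steps are direct sign arguments.
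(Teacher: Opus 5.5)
Your proposal is correct and follows essentially the same route as the paper's proof. Both arguments use \Cref{lem:control-agreement} to make both controls interior and so get $R_q=0$. Both then use the positive coefficient $\frac{\tau^2}{2}\bigl(1-\frac{\mat{Q}_{ii}}{2}\bigr)$ in $\partial_{x_{q,i}}\Phi-\partial_{y_{q,i}}\Phi$ to rule out $d_{q,i}\neq 0$, and finally read off the KKT conditions for $p$ from $\partial_{x_{q,i}}\Phi=\frac{\tau^2}{4}F_i(\vz_q)$.
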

\begin{proof}
By \cref{lem:control-agreement}, if $0 < \bar{s}_q < 1$, then $s_q=s_q'\in(0,1)$. Hence, both control variables are interior, so their derivatives are zero. By \cref{lem:derivatives}, this implies $R_q=0$. Therefore, for every $i\in[m]$,
\begin{equation}
\begin{aligned}
 \partial_{x_{q,i}} \Phi
&=\frac{\tau^2}{4}
\Biggl(F_i(\vz_q)+\left(1-\frac{\mat{Q}_{ii}}{2}\right)d_{q,i}\Biggr),\\
\partial_{y_{q,i}} \Phi
&=\frac{\tau^2}{4}
\Biggl(F_i(\vz_q)-\left(1-\frac{\mat{Q}_{ii}}{2}\right)d_{q,i}\Biggr).
\end{aligned}
 \label{eq:fractional-derivatives}
\end{equation}
We consider the following two cases:
\begin{itemize}
    \item If $d_{q,i}>0$, then $x_{q,i}>0$ and $y_{q,i}<1$, so the KKT conditions give $\partial_{x_{q,i}} \Phi \leq 0$ and $\partial_{y_{q,i}} \Phi \geq 0$. Thus, $\partial_{x_{q,i}} \Phi-\partial_{y_{q,i}} \Phi \leq0$. On the other hand, since $|\mat{Q}_{ii}|\leq 1$ implies $1-\mat{Q}_{ii}/2\geq 1/2$, \eqref{eq:fractional-derivatives} gives
    \[
     \partial_{x_{q,i}} \Phi-\partial_{y_{q,i}} \Phi =\frac{\tau^2}{2}\left(1-\frac{\mat{Q}_{ii}}{2}\right)d_{q,i}>0,
    \]
    which is a contradiction.
    \item Similarly, if $d_{q,i}<0,$ then $x_{q,i} < 1$ and $y_{q,i} > 0.$ The KKT conditions imply $\partial_{x_{q,i}} \Phi \geq0$ and $\partial_{y_{q,i}} \Phi \leq0$. Hence, $\partial_{x_{q,i}} \Phi-\partial_{y_{q,i}} \Phi \geq0$, again contradicting~\eqref{eq:fractional-derivatives}.
\end{itemize}
We conclude that $d_{q,i}=0$ and thus $\vx_q = \vy_q = \vz_q$ for every
$i \in [m]$. Substituting $d_{q,i} = 0$ into \eqref{eq:fractional-derivatives} yields
$$\partial_{x_{q,i}} \Phi = \partial_{y_{q,i}} \Phi = \frac{\tau^2}{4} F_i(\vz_q).$$
If $x_{q,i} = y_{q,i} =0$, the KKT conditions give
$F_i(\vz_q)\geq0$; if $x_{q,i} = y_{q,i} =1$, they give $F_i(\vz_q)\leq 0$. If $x_{q,i} = y_{q,i} \in (0, 1),$ we have $F_i(\vz_q)=0$. We conclude that $\vz_q$ satisfies the KKT conditions of the original quadratic objective $p$.
\end{proof}

\subsection{Boolean case}
\label{sec:boolean-branch}

We next analyze what we refer to as the Boolean case, wherein $(\bar{s}_1,\ldots,\bar{s}_n) \in \{0,1\}^n$. The key idea in this case is to employ a different decoder, as we formalize below.

\begin{lemma}
\label{lem:boolean}
Let $(\vx,\vy,\vs,\vs')$ be a KKT point of $\Phi$. If $\bar{s}_q\in \{0,1\}$ for every $q\in[n]$, then
$\vb\defeq (\bar{s}_1,\ldots,\bar{s}_n)$ is a local minimum of $h$ under single-bit flips.
\end{lemma}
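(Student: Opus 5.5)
The plan is to read the local-optimality condition of $h$ off the KKT conditions for the control variables, using the residual $R_q$ from \Cref{lem:derivatives}. By \eqref{eq:flip}, $\vb$ is a local minimum of $h$ exactly when, for every $q$, $\nabla h(\vb)_q\ge 0$ if $b_q=0$ and $\nabla h(\vb)_q\le 0$ if $b_q=1$. So it suffices to derive these two sign conditions at $\vb$ from the KKT conditions, which speak about $\nabla h(\vshiftedbit)_q$. I would then transfer from $\vshiftedbit$ to $\vb$ using \eqref{eq:shift-bound} and integrality.

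\textbf{Step 1 (sign of $R_q$).} By \Cref{lem:control-agreement}, $s_q=s_q'=\bar s_q=b_q$, so the term $(s_q-s_q')/16$ vanishes and $\partial_{s_q}\Phi=\tfrac12 R_q$.
If $b_q=0$, then $s_q=0<1$, and the KKT condition gives $\partial_{s_q}\Phi\ge 0$, hence $R_q\ge 0$, i.e.\ $\nabla h(\vshiftedbit)_q\ge \shiftedbit_q/4$.
If $b_q=1$, then $s_q>0$, so $\partial_{s_q}\Phi\le 0$, hence $R_q\le 0$, i.e.\ $\nabla h(\vshiftedbit)_q\le \shiftedbit_q/4$.

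\textbf{Step 2 (transfer to $\vb$).} Since $\nabla h(\vu)_q=\beta_q+\sum_{r\ne q}w_{qr}u_r$ does not depend on $u_q$ and is affine in the other coordinates, we have
\[
\bigl|\nabla h(\vshiftedbit)_q-\nabla h(\vb)_q\bigr|\le \sum_{r\ne q}|w_{qr}|\,|\shiftedbit_r-b_r|\le \frac{W-1}{4W}<\frac14,
\]
using \eqref{eq:shift-bound} together with $\bar s_r=b_r$. Also by \eqref{eq:shift-bound}, $\shiftedbit_q\in[b_q-\tfrac1{4W},\,b_q+\tfrac1{4W}]$ with $W\ge 1$.

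In the case $b_q=0$, this gives
\[
\nabla h(\vb)_q > \frac{\shiftedbit_q}{4}-\frac14 \ge -\frac1{16}-\frac14 > -1.
\]
In the case $b_q=1$, it gives
\[
\nabla h(\vb)_q < \frac{\shiftedbit_q}{4}+\frac14 \le \frac{5}{16}+\frac14 < 1.
\]

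\textbf{Step 3 (integrality).} Since $h$ has integer coefficients and $\vb$ is Boolean, $\nabla h(\vb)_q\in\mathbb Z$. The strict bounds from Step 2 therefore round to $\nabla h(\vb)_q\ge 0$ when $b_q=0$ and $\nabla h(\vb)_q\le 0$ when $b_q=1$. By \eqref{eq:flip}, $h(\vb^{\oplus q})-h(\vb)=(1-2b_q)\nabla h(\vb)_q\ge 0$ for every $q$, which is the claimed local minimality.

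The main obstacle is Step 2: checking that the perturbation caused by the shift $\tau\sum_j d_{q,j}$, plus the $\shiftedbit_q/4$ term in $R_q$, stays strictly below $1$ in total. This is exactly what the choices $W=1+\max_q\sum_{r\ne q}|w_{qr}|$ and $\tau=1/(4mW)$ are designed to guarantee, so that integrality closes the gap. The remaining steps are direct bookkeeping.
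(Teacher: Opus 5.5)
Your proof is correct and follows the paper's argument. You use \Cref{lem:control-agreement} to get $\partial_{s_q}\Phi=R_q/2$ and read off the sign of $R_q$ from the KKT conditions, bound the gap between $R_q$ and $\nabla h(\vb)_q$ below $1$ via \eqref{eq:shift-bound} and the choice of $W$, and finish with integrality and \eqref{eq:flip}; the only cosmetic difference is that the paper proves the two-sided bound $|R_q-\nabla h(\vb)_q|\le\frac14+\frac{5}{16}<1$ once before splitting into cases, whereas you use one-sided bounds within each case.
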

\begin{proof}
Since $\bar{s}_r=b_r$ for every $r\in[n]$, \eqref{eq:shift-bound} implies
$|\shiftedbit_r-b_r|\leq1/(4W)$. Therefore,
\begin{equation*}
 \left|\nabla h(\vshiftedbit)_q-\nabla h(\vb)_q\right|
\leq
\sum_{r\neq q}|w_{qr}|\cdot|\shiftedbit_r-b_r|
\leq\frac14.
\end{equation*}
Furthermore, since $b_q\in\{0,1\}$,
\[
 |\shiftedbit_q|
\leq
|b_q|+|\shiftedbit_q-b_q|
\leq
1+\frac{1}{4W}
\leq\frac54.
\]
By the definition of $R_q$ in \eqref{eq:R},
\begin{equation}
 \left|R_q-\nabla h(\vb)_q\right|
\leq
\frac14+\frac{5}{16}
<1.
 \label{eq:integer-margin}
\end{equation}
Since $h$ has integer coefficients and $\vb$ is a Boolean vector,
$\nabla h(\vb)_q\in\mathbb Z$. Hence, the bound in
\eqref{eq:integer-margin} ensures that $R_q$ cannot have the opposite sign from any nonzero value of $\nabla h(\vb)_q$.

By \cref{lem:control-agreement}, $s_q=s_q'=\bar{s}_q=b_q$. Hence,
\cref{lem:derivatives} gives $\partial_{s_q} \Phi=\partial_{s_q'} \Phi=R_q/2.$ If $b_q=0$, the KKT conditions imply $R_q\geq0$, and therefore \eqref{eq:integer-margin} implies $\nabla h(\vb)_q\geq0.$ If  $b_q=1$, we have $R_q\leq0$, hence $\nabla h(\vb)_q\leq0$. In either case,
$$
(1-2b_q)\nabla h(\vb)_q\geq0.
$$
By \eqref{eq:flip}, this is equivalent to $$h(\vb^{\oplus q})-h(\vb)\geq0.$$
Since this holds for every $q\in[n]$, we conclude that $\vb$ is a local minimum of $h$ under single-bit flips.
\end{proof}

\subsection{Putting everything together}
\label{sec:completion}

We are now ready to complete the reduction.

\begin{theorem}
\label{prop:exact-reduction}
$\QKKT\red\SFKKT$.
\end{theorem}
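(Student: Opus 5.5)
The reduction assembles the pieces already established. Given a \QKKT instance $p$, first rescale so that $|\mat{Q}_{ii}|\le 1$; positive rescaling preserves exact KKT points. Then apply \cref{lem:boolean-alternative} to obtain in polynomial time the integer-coefficient bilinear $h$ on $\{0,1\}^n$ together with the decoder $\mathcal D_p$. Next, form $W$, $\tau=1/(4mW)$ and the potential $\Phi$ of \eqref{eq:Phi}.

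The instance map outputs $\Phi$ as a \SFKKT instance over $[0,1]^{2mn+2n}$. I would verify three points.

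\begin{enumerate}
\item $\Phi$ is genuinely bilinear. $h(\vshiftedbit)$ only involves products $\shiftedbit_q\shiftedbit_r$ with $q\neq r$, and these lie in disjoint variable blocks. $\widehat P$ has no squares by construction. The cross term $\bar{s}_q\sum_j d_{q,j}$ and the term $s_qs_q'$ multiply distinct variables.
\item Expanding $\Phi$ yields $O(n^2m^2)$ monomials with rational coefficients of polynomial encoding length, since $W$ and $\tau$ have polynomially many bits. Hence the instance map runs in polynomial time.
\item Every KKT point decodes to a KKT point of $p$.
\end{enumerate}

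The decoder takes a KKT point $(\vx,\vy,\vs,\vs')$ of $\Phi$ and uses \cref{lem:control-agreement} to get $s_q=s_q'=\bar{s}_q$. It scans $q\in[n]$.
\begin{itemize}
\item If some $\bar{s}_q\in(0,1)$, it outputs $\vz_q$, which is an exact KKT point of $p$ by \cref{lem:fractional}.
\item Otherwise $\vb=(\bar{s}_1,\dots,\bar{s}_n)\in\{0,1\}^n$. By \cref{lem:boolean}, $\vb$ is a local minimum of $h$, and the decoder outputs $\mathcal D_p(\vb)$, which is an exact KKT point of $p$ by \cref{lem:boolean-alternative}.
\end{itemize}
These two cases are exhaustive. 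Both run in polynomial time, given that the input KKT point has rational coordinates of polynomial size.

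The only step needing some care is the bookkeeping on the KKT convention. \SFKKT is stated as minimization of a bilinear polynomial with $\mat{Q}_{ii}=0$ in the form \eqref{eq:source}. So I would rewrite $\Phi$ as $\frac12\vz^\transpose\mat{Q}'\vz+\vc'^\transpose\vz$ plus a constant, with symmetric $\mat{Q}'$ and zero diagonal. The constant terms, coming for instance from $\beta_q$ times the constant parts of $\shiftedbit_q$, do not affect KKT points. The KKT points of this form coincide with those of $\Phi$, since the gradients agree.
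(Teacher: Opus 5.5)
Your proposal is correct and is essentially the paper's proof: it uses the same instance $\Phi$ from \eqref{eq:Phi} and the same decoder, returning $\vz_q$ when some $\bar{s}_q\in(0,1)$ (\cref{lem:fractional}) and $\mathcal D_p(\bar{\vs})$ otherwise (\cref{lem:boolean,lem:boolean-alternative}). Your explicit checks of square-freeness and encoding size spell out what the paper compresses into ``the output has polynomial encoding length.'' One small correction: $\Phi$ has no constant term at all, since each $\shiftedbit_q$ is a linear form in block-$q$ variables, so your remark about dropping constants is harmless but vacuous.
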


\begin{proof}
The output has polynomial encoding length. Given a rational KKT point of $\Phi$, the decoder checks the values $\bar{s}_q$. If some $\bar{s}_q$ lies in $(0,1)$, it returns $\vz_q$, which can be converted into a KKT point of $p$ by \cref{lem:fractional}. Otherwise, it returns $\mathcal D_p(\bar{\vs})$, which again can be converted into a KKT point of $p$ by \cref{lem:boolean-alternative,lem:boolean}.
\end{proof}

Combining with the $\CLS$-hardness for \QKKT~\citep{FGHS2025}, \Cref{thm:main} follows.

\section{Linear network congestion games}
\label{sec:network-congestion}

We next establish $\CLS$-completeness in network congestion games (introduced in~\Cref{sec:congestion-def}). We provide below a more detailed version of~\Cref{thm:congestion-hardness}.

\begin{theorem}
\label{thm:linear-network-congestion}
\NCGN is $\CLS$-complete, even on an unweighted directed acyclic network with linear latencies.
\end{theorem}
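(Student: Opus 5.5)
Membership in $\CLS$ is already known: \citet{DP2011} placed \NCGN\ in $\CLS$, and linear latencies on a DAG are a special case. I would still check the output format, since the equilibrium must be represented by polynomially many paths per player. I would argue that the expected-cost potential is a continuous function of the edge-usage marginals, so an equilibrium with polynomial support exists and can be recovered from a flow-based KKT formulation. Alternatively, membership can be cited in the form stated by \citet{FGHS2022}. The substantive part is hardness, which I would obtain by reducing from \NCN, shown $\CLS$-hard via \Cref{thm:main}.

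\textbf{Step 1: normalize the coordination game into a threshold-like congestion game.} Start from a binary-action network coordination game with payoff matrices $\mat{A}^{ij}$. Subtracting terms that depend on only one player's action does not change any best response. Using this, rewrite each edge interaction so that player $i$'s payoff difference between actions $1$ and $0$ is $-\beta_i - \sum_j w_{ij} x_j$, as in the proof of \Cref{lem:game-realization}. This yields an equivalent game in which the only pairwise interaction is a cost $w_{ij}$ incurred when both $i$ and $j$ play action $1$ (the $a_ia_j$ term), plus individual costs $\beta_i a_i$. Handle signs by flipping action labels and shifting constants, following the spirit of \citet{Ackermann08:Impact}'s quadratic threshold games. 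Concretely, each player chooses between an ``in'' strategy using a set of shared resources $r_{ij}$ (one per edge, with linear latency $\alpha k$, so a cost of $\alpha$ alone or $2\alpha$ together) and an ``out'' strategy using a private resource with constant-equivalent cost. A linear resource shared by exactly two potential users contributes $\alpha\cdot(1+[\text{other uses}])$, which is exactly a bilinear interaction. Negative $w_{ij}$ is handled by relabeling one endpoint's actions or by placing $r_{ij}$ on the ``out'' path of one player, so that all $\alpha\ge 0$. Since only bilinear structure matters, expected costs in mixed profiles are affine in each other player's mixing probability, which matches the coordination game's expected payoffs exactly.

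\textbf{Step 2: embed into a DAG.} Following \citet{Ackermann08:Impact}, lay out each player's two strategies as two $s_i$-$t_i$ paths in a grid-like DAG in which player $i$'s paths cross the shared resource edges $r_{ij}$ in a fixed order. The main obstacle is ensuring that no player gains by using unintended paths, that is, shortcuts mixing segments of other players' routes. The standard fix is to add heavy ``blocking'' edges: every unintended path must traverse at least one edge with latency coefficient $\alpha_e$ so large (larger than the total cost of any intended path under any load) that it is strictly dominated even in mixed profiles. Then every Nash equilibrium puts zero probability on unintended paths. Dominance is pointwise in others' pure profiles, so it survives mixing. Private constant costs $\beta_i$ must also be made linear: give each player a private edge that only they can use, so its load is always $1$ and $\alpha_e\cdot 1=\beta_i$.

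\textbf{Step 3: decoding.} Given an equilibrium of the congestion game, discard the zero-probability unintended paths. Set $x_i$ to the probability of the ``in'' path; if the relabeling swapped player $i$'s actions, set $x_i$ to the probability of the ``out'' path instead. By Steps 1--2, player $i$'s expected cost difference between its two intended paths equals a positive multiple of $u_i(0,\vx_{-i})-u_i(1,\vx_{-i})$. Hence the equilibrium conditions coincide, and $\vx$ is a Nash equilibrium of the coordination game. I expect the crossing and blocking gadget of Step 2 to require the most care, specifically verifying that the bounded-in-degree grid admits no cheap unintended paths; the rest is bookkeeping.
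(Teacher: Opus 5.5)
There is a genuine gap, and it is in Step 2.

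\textbf{Step 1.} This step is workable, but only through your second option. Relabeling actions alone cannot make all interaction weights nonnegative, because flipping a player changes the sign of every incident interaction; for example, a triangle of three coordination edges stays frustrated. Putting $r_{ij}$ on one player's out-path does work. The paper avoids the issue entirely: each action $a$ of player $i$ gets its own path, labeled $r(i,a)$, and each action pair gets its own shared edge of weight $1-\mat{A}^{ij}_{ab}\in[0,1]$.

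\textbf{Why blocking edges cannot work.} At a shared edge $(u,v)$, a path may arrive at $u$ along one player's route and leave $v$ along another's. So the network necessarily contains unintended $s_i$-$t_i$ paths built solely from segments of intended paths. Topology cannot rule these out: for three mutually interacting players, list their three shared edges in topological order. The player whose route contains the first and the last can get from the first to the last through the other two players' segments. Any edge on such a path lies on some intended path. Giving it a latency larger than the cost of every intended path is therefore self-contradictory, so dominance of unintended paths cannot be forced edge by edge.

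\textbf{What is needed instead.} The missing idea is a global cost accounting along the whole path, which is the heart of the paper's adaptation of \citet{Ackermann08:Impact}.
\begin{itemize}
\item In the triangular grid, the canonical path $P_r$ follows row $r$ to the diagonal and then column $r$. Horizontal edges in row $r$ cost $rD$.
\item A path entering row $r$ and exiting column $c$ therefore pays at least $H(r,c)=D\sum_{j=1}^{c-1}\max\{r,j+1\}$ in horizontal cost.
\item Only $P_r$ attains this bound. For $c=r$ this is immediate. For the other admissible exits $c=r\pm1$, equality would need a turn at $(2i,2i-1)$, and that turn is removed by a bridge.
\item The private tolls $f_r,g_c$ are calibrated so that $f_r+H(r,c)+g_c=T$ for every admissible $(r,c)$.
\item Since horizontal costs are multiples of $D$, every noncanonical path costs at least $D$ more than the canonical path with the same entrance. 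Because $D>4NW$, this exceeds all interaction costs.
\end{itemize}

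\textbf{A second missing point.} These row costs sit on edges that other players' deviating paths can also use. A plain latency $rD\cdot k$ would let such deviations raise the cost of $P_r$ by multiples of $D$ and destroy the dominance. The paper instead simulates each constant cost $z$ by the linear latency $\frac{z}{K+1}k$ together with $K=16N^3$ auxiliary single-path players on that edge. Each additional original user then adds only a negligible amount. Your private-edge trick covers only the source and destination tolls, not these shared edges.
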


Membership in $\CLS$ is known from existing results~\citep{DP2011,Filos24:Ppad}, so we establish $\CLS$-hardness below. 

\begin{proof}[Proof of~\Cref{thm:linear-network-congestion}]
We reduce from \NCN. Let $\calG$ be a binary-action network coordination game with $N$ players and
payoff matrices $\mat{A}^{ij}$ (per~\Cref{sec:game-def}). We associate action $a\in\{0,1\}$ of player $i$ with the label $r(i,a)\defeq 2i-1+a$. We define symmetric weights by
\[
 w_{r(i,a),r(j,b)}\defeq 1-\mat{A}^{ij}_{ab}
 \quad\text{for }(i,j)\in E,
\]
and set all remaining weights to $0$. We first construct a game with nonnegative affine latencies, and then remove the constant terms.

\paragraph{The network congestion game.}
We adapt the reduction of~\citet{Ackermann08:Impact};
\Cref{fig:triangular-network} illustrates the construction. Each position $(r, c)$, with $1\leq c\leq r\leq 2N$, has separate horizontal and vertical entrances and exits. At a position $r>c$ belonging to different players, both entrances lead to a common edge of latency $w_{r c}k$, whose endpoint connects to both exits. At $(2i,2i-1)$, the horizontal entrance
connects only to the horizontal exit, and the vertical entrance only to the vertical exit, so that the two routes do not intersect. At $(r,r)$, the horizontal
entrance connects to the vertical exit.

We define $D \defeq 4 N ( 1 + W)$, where $W \defeq \sum_{r < r'} w_{r r'}$. We will also use the shorthand notation $W_r \defeq \sum_{r' = 1}^{2N} w_{r r'}$. Consecutive horizontal positions are connected by edges of constant latency $rD$ in row $r$, and consecutive vertical positions by edges of zero latency. Each player $i$ has a source node $s_i$ connected to the horizontal entrances of rows $2i-1,2i$ in column $1$ via \emph{source edges}, and a destination $t_i$ reached from columns $2i-1,2i$ in the bottom row via \emph{destination edges}.  Only player $i$ can use its source and
destination edges. For each label $r$, we give its source and destination edges constant latencies $f_r + W - W_r$ and $g_r$, respectively, where
\begin{equation}
 f_r\defeq D\left(4N^2-\frac{(r-1)(r-2)}2\right)\quad \text{and}\quad
 g_r\defeq D\left(4N^2-\frac{(r-1)(r+2)}2\right).
 \label{eq:network-tolls}
\end{equation}
These latencies are nonnegative. Now, let $P_r$ be the path that enters row $r$, follows it to $(r,r)$, and
then follows column $r$ to the destination. We call these paths \emph{canonical}. If $T \defeq 8 D N^2$, we have $f_r+Dr(r-1)+g_r=T$. That is, excluding the adjustment term $W - W_r$, the constant costs of the canonical paths are identical.

\begin{figure}[!t]
 \centering
 % Section 5: the adapted Ackermann--Roeglin--Voecking triangular grid.
% Include inside a figure using \input{triangular_grid.tikz}.
% Requires tikz and the arrows.meta library. All dimensions are in cm.
\begingroup
\definecolor{gridblue}{RGB}{0,92,153}
\definecolor{gridorange}{RGB}{184,83,0}
\definecolor{gridink}{RGB}{55,61,68}
\definecolor{gridgray}{RGB}{135,141,148}
\begin{tikzpicture}[
  x=1cm,y=1cm,
  font=\small,
  line cap=round,line join=round,
  >={Stealth[length=3.4pt,width=3pt]},
  grid edge/.style={draw=gridgray,line width=.45pt,->,
    shorten <=1.6mm,shorten >=1.6mm},
  boundary edge/.style={draw=gridgray,line width=.45pt,->},
  blue path/.style={draw=gridblue,line width=1.25pt},
  orange path/.style={draw=gridorange,line width=1.25pt,
    dash pattern=on 3pt off 1.6pt},
  terminal/.style={circle,draw=gridink,fill=white,line width=.65pt,
    inner sep=0pt,minimum size=3.1mm},
  shared cell/.style={rectangle,draw=gridink,fill=white,
    line width=.6pt,inner sep=0pt,minimum size=3.2mm},
  local edge/.style={draw=gridink,line width=.65pt,->},
  small label/.style={font=\footnotesize,text=gridink},
  panel label/.style={font=\small\bfseries,text=gridink,anchor=west}
]

% (a) The lower triangular array. Row indices increase downwards.
\begin{scope}
\colorlet{gridink}{black}
\colorlet{gridgray}{black}
\node[panel label] at (-.1,.75) {};
\foreach \r in {1,...,6} {
  \foreach \c in {1,...,\r} {
    \coordinate (g-\r-\c) at ({1.6+\c-1},{-.9*(\r-1)});
  }
}
\foreach \r in {2,...,6} {
  \pgfmathtruncatemacro{\lastcol}{\r-1}
  \foreach \c in {1,...,\lastcol} {
    \pgfmathtruncatemacro{\nextcol}{\c+1}
    \draw[grid edge] (g-\r-\c) -- (g-\r-\nextcol);
  }
}
\foreach \r in {1,...,5} {
  \pgfmathtruncatemacro{\nextrow}{\r+1}
  \foreach \c in {1,...,\r} {
    \draw[grid edge] (g-\r-\c) -- (g-\nextrow-\c);
  }
}

% Private sources and destinations: two labels for each original player.
\foreach \i in {1,2,3} {
  \pgfmathtruncatemacro{\oddlabel}{2*\i-1}
  \pgfmathtruncatemacro{\evenlabel}{2*\i}
  \node[terminal,label={[text=gridink]left:$s_\i$}]
    (s-\i) at (.15,{-1.8*(\i-1)-.45}) {};
  \node[terminal,label={[text=gridink]below:$t_\i$}]
    (t-\i) at ({2.1+2*(\i-1)},-5.55) {};
  \foreach \r in {\oddlabel,\evenlabel} {
    \draw[boundary edge,shorten >=1.6mm] (s-\i) -- (g-\r-1);
    \draw[boundary edge,shorten <=1.6mm] (g-6-\r) -- (t-\i);
  }
}

% P_2 (solid) and P_5 (dashed) belong to different players. Their unique
% shared resource is the edge represented by the square at (5,2).
\draw[blue path,->,shorten >=1.6mm] (s-1) -- (g-2-1);
\draw[grid edge,blue path] (g-2-1) -- (g-2-2);
\foreach \r in {2,...,5} {
  \pgfmathtruncatemacro{\nextrow}{\r+1}
  \draw[grid edge,blue path] (g-\r-2) -- (g-\nextrow-2);
}
\draw[blue path,->,shorten <=1.6mm] (g-6-2) -- (t-1);
\draw[orange path,->,shorten >=1.6mm] (s-3) -- (g-5-1);
\foreach \c in {1,...,4} {
  \pgfmathtruncatemacro{\nextcol}{\c+1}
  \draw[grid edge,orange path] (g-5-\c) -- (g-5-\nextcol);
}
\draw[grid edge,orange path] (g-5-5) -- (g-6-5);
\draw[orange path,->,shorten <=1.6mm] (g-6-5) -- (t-3);

% Representative latencies; the line key below states the general rule.
\path[font=\scriptsize,text=gridink]
  (g-2-1) -- node[midway,above=1pt,inner sep=1pt] {$2D$} (g-2-2)
  (g-3-2) -- node[midway,above=1pt,inner sep=1pt] {$3D$} (g-3-3)
  (g-4-3) -- node[midway,above=1pt,inner sep=1pt] {$4D$} (g-4-4)
  (g-5-3) -- node[midway,above=1pt,inner sep=1pt] {$5D$} (g-5-4)
  (g-6-4) -- node[midway,above=1pt,inner sep=1pt] {$6D$} (g-6-5)
  (g-1-1) -- node[midway,right=1pt,inner sep=1pt] {$0$} (g-2-1)
  (g-3-2) -- node[midway,left=1pt,inner sep=1pt] {$0$} (g-4-2)
  (g-3-3) -- node[midway,right=1pt,inner sep=1pt] {$0$} (g-4-3)
  (g-5-5) -- node[midway,left=1pt,inner sep=1pt] {$0$} (g-6-5);

% Expand the three kinds of cell symbol after drawing the connecting edges.
\foreach \r in {1,...,6} {
  \foreach \c in {1,...,\r} {
    \pgfmathtruncatemacro{\samplayer}{(mod(\r,2)==0)&&(\c==\r-1)}
    \begin{scope}[shift={(g-\r-\c)}]
      \ifnum\r=\c
        % The diagonal has only a horizontal entrance and a vertical exit.
        \draw[draw=gridink,line width=.65pt] (-.16,0) -- (0,0) -- (0,-.16);
        \ifnum\r=2
          \draw[blue path] (-.16,0) -- (0,0) -- (0,-.16);
        \fi
        \ifnum\r=5
          \draw[orange path] (-.16,0) -- (0,0) -- (0,-.16);
        \fi
      \else
        \ifnum\samplayer=1
          % A bridge denotes two disjoint edges, never a switching vertex.
          \draw[draw=gridink,line width=.65pt] (0,.16) -- (0,-.16);
          \ifnum\r=6
            \draw[orange path] (0,.16) -- (0,-.16);
          \fi
          \draw[draw=gridink,line width=.65pt,
            preaction={draw=white,line width=2.8pt}]
            (-.16,0) -- (-.065,0) arc[start angle=180,end angle=0,radius=.065]
            -- (.16,0);
          \ifnum\r=2
            \draw[blue path,preaction={draw=white,line width=3pt}]
              (-.16,0) -- (-.065,0) arc[start angle=180,end angle=0,radius=.065]
              -- (.16,0);
          \fi
        \else
          \node[shared cell] {};
        \fi
      \fi
    \end{scope}
  }
}
\node[shared cell,line width=1pt,fill=gridink!12] at (g-5-2) {};
\draw[draw=gridink,line width=.45pt]
  (2.44,-3.44) -- (2.25,-3.25);
\node[small label,inner sep=1pt] at (2.10,-3.13) {$(5,2)$};
\node[text=gridblue,anchor=west] at (2.7,-.9) {$P_2$};
\node[text=gridorange,anchor=west] at (5.7,-3.6) {$P_5$};

% Indices are offset from the boundary edges to keep every label legible.
\foreach \r in {1,...,6} {
  \node[font=\scriptsize,text=gridgray,anchor=south]
    at (1.29,{-.9*(\r-1)+.10}) {$\r$};
  \node[font=\scriptsize,text=gridgray,anchor=west]
    at ({1.6+\r-1+.16},-4.72) {$\r$};
}

% A compact line key also makes the selected paths distinguishable in print.
\draw[blue path] (.15,-6.48) -- (.7,-6.48);
\node[small label,anchor=west] at (.82,-6.48) {$P_2$ ($i=1$, $a=1$)};
\draw[orange path] (4.02,-6.48) -- (4.57,-6.48);
\node[small label,anchor=west] at (4.69,-6.48) {$P_5$ ($i=3$, $a=0$)};
\end{scope}

% (b) Local graph topology. A filled dot always denotes an actual vertex.
\node[panel label] at (8.15,.75) {};
\node[small label,text=black,anchor=west] at (8.15,.23)
  {Different players: shared edge at $(r,c)$};
\begin{scope}[shift={(10.9,-1.1)}]
  \coordinate (merge) at (-.48,0);
  \coordinate (split) at (.48,0);
  \draw[local edge] (-1.65,0) node[left] {$H_{\rm in}$} -- (merge);
  \draw[local edge] (-.48,.55) node[above] {$V_{\rm in}$} -- (merge);
  \draw[local edge,line width=1.3pt] (merge) --
    node[above=4pt,font=\small] {$w_{rc}k$} (split);
  \draw[local edge] (split) -- (1.65,0) node[right] {$H_{\rm out}$};
  \draw[local edge] (split) -- (.48,-.55) node[below] {$V_{\rm out}$};
  \fill[gridink] (merge) circle (1.5pt);
  \fill[gridink] (split) circle (1.5pt);
\end{scope}

\node[small label,text=black,anchor=west] at (8.15,-3.85)
  {Same player: disjoint routes at $(2i,2i-1)$};
\begin{scope}[shift={(10.9,-5.1)}]
  \draw[local edge] (0,.5) node[above] {$V_{\rm in}$}
    -- (0,-.5) node[below] {$V_{\rm out}$};
  \draw[local edge,preaction={draw=white,line width=4pt}]
    (-1.65,0) node[left] {$H_{\rm in}$} -- (-.14,0)
    arc[start angle=180,end angle=0,radius=.14]
    -- (1.65,0) node[right] {$H_{\rm out}$};
\end{scope}
\end{tikzpicture}
\endgroup
 \caption{Our construction, adapted from
 \citet{Ackermann08:Impact}, for $N=3$ players (corresponding to six action labels). Player $i$ enters through row $2i-1$ or $2i$ and leaves through column $2i-1$ or $2i$. The highlighted canonical paths $P_2$ and $P_5$ share the congestion edge at $(5,2)$. Squares abbreviate the shared edge shown on the right figure. Bridges at $(2i,2i-1)$ represent disjoint routes, so
 turning between the two labels of one player is impossible.}
 \label{fig:triangular-network}
\end{figure}

We will now argue that any noncanonical path is \emph{strictly dominated}. The basic idea is that when $D$ is large enough, the player always wants to choose a path that makes the horizontal crosses as early as possible, because horizontal edges become more costly at higher row indices. To formalize this, we consider a path that enters at row $r$ and exits at column $c$. Such a path must traverse exactly $c -1 $ horizontal edges. Moreover, the edge going from column $j$ to column $j+1$ must lie in a row with index at least $\max\{r, j+1\}$. As a result, the total cost from traversing these horizontal edges is at least $H(r,c) \defeq D\sum_{j=1}^{c-1}\max\{r,j+1\}$. The entrance and exit labels belong to the same player, so there are only three possibilities: $c \in \{r-1, r, r+1\}$. By construction of the latencies in~\eqref{eq:network-tolls}, we have $f_r+H(r,c)+g_c=T$ in all three cases. 

We now show that only a canonical path can attain the bound $H(r,c)$. If $c = r$, attaining the bound requires all horizontal edges to lie in row $r$, giving exactly the canonical path $P_r$. If $c=r-1$, it requires following row $r$ to column $r-1$, then turning down at $(r,r-1)$. If $c=r+1$, it requires following row $r$ to the diagonal, moving down to $(r+1,r)$, then turning right. In the latter two cases, the required turn is impossible because the row and column labels belong to the same player. We conclude that every noncanonical path has horizontal cost strictly greater than $H(r, c)$. Since horizontal costs are multiples of $D$, every noncanonical path has cost at least $T+ D+ W - W_r$.

Against arbitrary paths of the other players, the cost of $P_r$ is at most $T + W- W_r + N W_r \leq T + W - W_r + N W$, since each of its common edges is used
by at most $N$ players. Since $D > 4 N W$, every noncanonical path is strictly dominated by the canonical path with the same entrance.

When all players choose canonical paths, $P_r$ shares a common edge with each other chosen path and pays $W_r$ for using its common edges alone. As a result, at every pure profile $\va$,
\begin{equation}
 \mathrm{cost}_i\bigl(P_{r(1,a_1)},\ldots,P_{r(N,a_N)}\bigr)
 =T + W +\sum_{j\ne i}w_{r(i,a_i),r(j,a_j)}
 =T + W + \deg_G(i)-u_i(\va),
 \label{eq:network-payoffs}
\end{equation}
where $\deg_G(i)$ is $i$'s degree in the graph $G$ of the original network coordination game and $u_i(\va)$ is $i$'s utility in $\mathcal{G}$. Thus, the game restricted to canonical paths is strategically equivalent to $\calG$.

This completes the reduction for affine latencies. It remains to eliminate the constant terms in each latency function while preserving both the exclusion of noncanonical paths and the payoffs on canonical paths. 

First, we replace a source or destination edge of constant latency $z$ by latency
$zk$. If a player uses that edge, it will still pay exactly $z$. For every horizontal edge of
constant latency $z$, we add $K\defeq 16N^3$ auxiliary players whose source and destination are the tail and head of that edge, and replace its latency by
\[
 \ell_e(k)\defeq \frac{z}{K+1}k.
\]
Each auxiliary player has a unique path. No new edges are added. An original player using this edge now incurs cost $z(1+N' /(K+1))$, where $N' \leq N-1$ is the number of other original players using it. This is at least $z$, while the additional cost along $P_r$ is bounded by
\[
 \frac{N-1}{K+1}Dr(r-1)<\frac D4.
\]
Since the common-edge costs along $P_r$ are at most $N W < D/4$, every noncanonical path remains strictly dominated. On profiles corresponding to canonical paths, each horizontal edge is used by at most one original player, so~\eqref{eq:network-payoffs} still applies.

The constructed network has $O(N^2)$ vertices and edges, and $O(N^5)$ players. All coefficients have polynomial encoding length. Setting $x_i$ to be the probability of choosing $P_{2i}$ gives a Nash equilibrium $\vx$ of $\calG$. $\CLS$-hardness thus follows from~\Cref{thm:main}.
\end{proof}

% BEGIN sections/07_conclusion.tex
\section{Conclusion}
\label{sec:conclusion}

We showed that computing a Nash equilibrium is
$\CLS$-complete in network congestion games with linear latencies and also in network coordination games, even with two actions per player. As a corollary, we established $\CLS$-completeness for finding a KKT point of a bilinear polynomial over a box. In terms of future directions, we hope that our approach based on a two-way decoding argument---which is inspired by~\citet{FGHS2022,BC2026}---can find further uses to elucidate the complexity of other problems in $\TFNP$.

\appendix

\paragraph{Acknowledgments.} Ioannis Panageas and Jingming Yan were supported by NSF CCF-2454115. We used various versions of GPT Pro in the past months to explore proof strategies. In our current approach, GPT 6.0 Astra was used for minor derivations in~\Cref{sec:boolean-alternative,sec:reduction}, and for adapting the construction of~\citet{Ackermann08:Impact} starting from network coordination games in~\Cref{sec:network-congestion}.

\bibliography{main}

@inproceedings{BC2026,
  author       = {Martino Bernasconi and Matteo Castiglioni},
  title        = {The Complexity of Min-Max Optimization with Product Constraints},
  year         = {2026},
  booktitle       = {Symposium on Theory of Computing (STOC)}
}

@article{Goos24:Further,
  author       = {Mika G{\"{o}}{\"{o}}s and
                  Alexandros Hollender and
                  Siddhartha Jain and
                  Gilbert Maystre and
                  William Pires and
                  Robert Robere and
                  Ran Tao},
  title        = {Further Collapses in {TFNP}},
  journal      = {{SIAM} Journal on Computing},
  volume       = {53},
  number       = {3},
  pages        = {573--587},
  year         = {2024}
}

@inproceedings{Fearnley19:Unique,
  author       = {John Fearnley and
                  Spencer Gordon and
                  Ruta Mehta and
                  Rahul Savani},
  title        = {Unique End of Potential Line},
  booktitle    = {International Colloquium on Automata, Languages, and Programming (ICALP)},
  year         = {2019}
}

@article{Cai16:Zero,
  title={Zero-sum polymatrix games: A generalization of minmax},
  author={Cai, Yang and Candogan, Ozan and Daskalakis, Constantinos and Papadimitriou, Christos},
  journal={Mathematics of Operations Research},
  volume={41},
  number={2},
  pages={648--655},
  year={2016}
}

@article{Etessami10:Complexity,
  title={On the complexity of {N}ash equilibria and other fixed points},
  author={Etessami, Kousha and Yannakakis, Mihalis},
  journal={SIAM Journal on Computing},
  volume={39},
  number={6},
  pages={2531--2597},
  year={2010}
}

@article{Ishizuka21:Complexity,
  author       = {Takashi Ishizuka},
  title        = {On the complexity of finding a Caristi's fixed point},
  journal      = {Information Processing Letters},
  volume       = {170},
  pages        = {106119},
  year         = {2021}
}

@article{Fearnley17:CLS,
  author       = {John Fearnley and
                  Spencer Gordon and
                  Ruta Mehta and
                  Rahul Savani},
  title        = {{CLS:} New Problems and Completeness},
  journal      = {arXiv:1702.06017},
  year         = {2017}
}

@article{HY20:Hardness,
author = {Hub{\'a}{\v c}ek, Pavel and Yogev, Eylon},
title = {Hardness of Continuous Local Search: Query Complexity and Cryptographic Lower Bounds},
journal = {SIAM Journal on Computing},
volume = {49},
number = {6},
pages = {1128-1172},
year = {2020}
}

@inproceedings{Hollender25:Complexity,
  author       = {Alexandros Hollender and
                  Gilbert Maystre and
                  Sai Ganesh Nagarajan},
  title        = {The Complexity of Two-Team Polymatrix Games with Independent Adversaries},
  booktitle    = {International Conference on Learning Representations (ICLR)},
  year         = {2025}
}

@inproceedings{Daskalakis18:Converse,
  author       = {Constantinos Daskalakis and
                  Christos Tzamos and
                  Manolis Zampetakis},
  title        = {A converse to {B}anach's fixed point theorem and its {CLS}-completeness},
  booktitle    = {Symposium on Theory
                  of Computing (STOC)},
  year         = {2018}
}

@article{Anagnostides26:Algorithms,
  author       = {Ioannis Anagnostides and
                  Fivos Kalogiannis and
                  Ioannis Panageas and
                  Emmanouil{-}Vasileios Vlatakis{-}Gkaragkounis and
                  Stephen M. McAleer},
  title        = {Algorithms and complexity for computing {N}ash equilibria in adversarial
                  team games},
  journal      = {Games and Economic Behavior},
  volume       = {157},
  pages        = {138--152},
  year         = {2026}
}

@inproceedings{Tewolde24:Imperfect,
  author       = {Emanuel Tewolde and
                  Brian Hu Zhang and
                  Caspar Oesterheld and
                  Manolis Zampetakis and
                  Tuomas Sandholm and
                  Paul Goldberg and
                  Vincent Conitzer},
  title        = {Imperfect-Recall Games: Equilibrium Concepts and Their Complexity},
  booktitle    = {International Joint Conference on
                  Artificial Intelligence (IJCAI)},
  year         = {2024}
}

@inproceedings{Tewolde25:Computing,
  author       = {Emanuel Tewolde and
                  Brian Hu Zhang and
                  Caspar Oesterheld and
                  Tuomas Sandholm and
                  Vincent Conitzer},
  title        = {Computing Game Symmetries and Equilibria That Respect Them},
  booktitle    = {Conference on Artificial Intelligence (AAAI)},
  year         = {2025}
}

@inproceedings{Ghosh24:Complexity,
  author       = {Abheek Ghosh and
                  Alexandros Hollender},
  title        = {The Complexity of Symmetric Bimatrix Games with Common Payoffs},
  booktitle    = {Web and Internet Economics (WINE)},
  year         = {2024}
}

@inproceedings{CD2011,
  author    = {Yang Cai and Constantinos Daskalakis},
  title     = {On Minmax Theorems for Multiplayer Games},
  booktitle = {Symposium on Discrete Algorithms (SODA)},
  year      = {2011}
}

@inproceedings{DP2011,
  author    = {Constantinos Daskalakis and Christos Papadimitriou},
  title     = {Continuous Local Search},
  booktitle = {Symposium on Discrete Algorithms (SODA)},
  year      = {2011}
}

@inproceedings{Rubinstein16:Settling,
  author       = {Aviad Rubinstein},
  title        = {Settling the Complexity of Computing Approximate Two-Player {N}ash Equilibria},
  booktitle    = {Foundations of Computer Science (FOCS)},
  year         = {2016}
}

@article{Monderer96:Potential,
  title={Potential games},
  author={Monderer, Dov and Shapley, Lloyd S},
  journal={Games and economic behavior},
  volume={14},
  number={1},
  pages={124--143},
  year={1996}
}

@article{Johnson88:Easy,
  title={How easy is local search?},
  author={Johnson, David S and Papadimitriou, Christos H and Yannakakis, Mihalis},
  journal={Journal of computer and system sciences},
  volume={37},
  number={1},
  pages={79--100},
  year={1988}
}

@inproceedings{Fabrikant04:Complexity,
  title={The complexity of pure {N}ash equilibria},
  author={Fabrikant, Alex and Papadimitriou, Christos and Talwar, Kunal},
  booktitle={Symposium on Theory of Computing (STOC)},
  year={2004}
}

@article{Papadimitriou94:Complexity,
  title={On the complexity of the parity argument and other inefficient proofs of existence},
  author={Papadimitriou, Christos H},
  journal={Journal of Computer and system Sciences},
  volume={48},
  number={3},
  pages={498--532},
  year={1994}
}

@article{Rosenthal73:Class,
  title={A class of games possessing pure-strategy {N}ash equilibria},
  author={Rosenthal, Robert W},
  journal={International journal of game theory},
  volume={2},
  number={1},
  pages={65--67},
  year={1973}
}

@article{Daskalakis09:Complexity,
  title={The complexity of computing a {N}ash equilibrium},
  author={Daskalakis, Constantinos and Goldberg, Paul W and Papadimitriou, Christos H},
  journal={Communications of the ACM},
  volume={52},
  number={2},
  pages={89--97},
  year={2009}
}

@article{Chen09:Settling,
  title={Settling the complexity of computing two-player {N}ash equilibria},
  author={Chen, Xi and Deng, Xiaotie and Teng, Shang-Hua},
  journal={Journal of the ACM (JACM)},
  volume={56},
  number={3},
  pages={1--57},
  year={2009}
}

@inproceedings{BR2021,
  author       = {Yakov Babichenko and
                  Aviad Rubinstein},
  title        = {Settling the complexity of {N}ash equilibrium in congestion games},
  booktitle    = {Symposium on Theory of Computing (STOC)},
  year         = {2021}
}

@article{FGHS2022,
  author  = {John Fearnley and Paul W. Goldberg and Alexandros Hollender and Rahul Savani},
  title   = {The Complexity of Gradient Descent: {CLS} = {PPAD} $\cap$ {PLS}},
  journal = {Journal of the ACM},
  volume  = {70},
  number  = {1},
  pages   = {7:1--7:74},
  year    = {2022}
}

@article{FGHS2025,
  author       = {John Fearnley and
                  Paul W. Goldberg and
                  Alexandros Hollender and
                  Rahul Savani},
  title        = {The Complexity of Computing {KKT} Solutions of Quadratic Programs},
  journal      = {Journal of the {ACM}},
  volume       = {72},
  number       = {5},
  pages        = {31:1--31:48},
  year         = {2025}
}

@article{SY1991,
  author  = {Alejandro A. Sch{\"a}ffer and Mihalis Yannakakis},
  title   = {Simple Local Search Problems That Are Hard to Solve},
  journal = {SIAM Journal on Computing},
  volume  = {20},
  number  = {1},
  pages   = {56--87},
  year    = {1991}
}

@inproceedings{Skopalik08:Inapproximability,
  title={Inapproximability of pure {N}ash equilibria},
  author={Skopalik, Alexander and V{\"o}cking, Berthold},
  booktitle={Symposium on Theory of Computing (STOC)},
  year={2008}
}

@article{Myerson78:Refinements,
  title={Refinements of the {N}ash equilibrium concept},
  author={Myerson, Roger B},
  journal={International journal of game theory},
  volume={7},
  number={2},
  pages={73--80},
  year={1978}
}

@inproceedings{Tewolde23:Computational,
  author       = {Emanuel Tewolde and
                  Caspar Oesterheld and
                  Vincent Conitzer and
                  Paul W. Goldberg},
  title        = {The Computational Complexity of Single-Player Imperfect-Recall Games},
  booktitle    = {International Joint Conference on
                  Artificial Intelligence (IJCAI)},
  year         = {2023}
}

@inproceedings{Anagnostides26:Complexity,
      title={The Complexity of Equilibrium Refinements in Potential Games}, 
      author={Ioannis Anagnostides and Maria-Florina Balcan and Kiriaki Fragkia and Tuomas Sandholm and Emanuel Tewolde and Brian Hu Zhang},
      year={2026},
      booktitle={Conference on Economics and Computation (EC)}
}

@article{Selten75:Reexamination,
author = {Selten, R.},
title = {Reexamination of the perfectness concept for equilibrium points in extensive games},
year = {1975},
volume = {4},
number = {1},
journal = {International Journal of Game Theory},
pages = {25–55}
}

@article{Roughgarden15:Intrinsic,
  title={Intrinsic robustness of the price of anarchy},
  author={Roughgarden, Tim},
  journal={Journal of the ACM (JACM)},
  volume={62},
  number={5},
  pages={1--42},
  year={2015}
}

@inproceedings{Koutsoupias99:Worst,
  title={Worst-case equilibria},
  author={Koutsoupias, Elias and Papadimitriou, Christos},
  booktitle={Symposium on Theoretical Aspects of Computer Science (STACS)},
  year={1999}
}

@inproceedings{Caragiannis11:Efficient,
  title={Efficient computation of approximate pure {N}ash equilibria in congestion games},
  author={Caragiannis, Ioannis and Fanelli, Angelo and Gravin, Nick and Skopalik, Alexander},
  booktitle={Foundations of Computer Science (FOCS)},
  year={2011}
}

@article{Roughgarden02:Bad,
  title={How bad is selfish routing?},
  author={Roughgarden, Tim and Tardos, {\'E}va},
  journal={Journal of the ACM (JACM)},
  volume={49},
  number={2},
  pages={236--259},
  year={2002}
}

@inproceedings{Christodoulou05:Price,
  title={The price of anarchy of finite congestion games},
  author={Christodoulou, George and Koutsoupias, Elias},
  booktitle={Symposium on Theory of Computing (STOC)},
  year={2005}
}

@inproceedings{Filos24:Ppad,
  title={{PPAD}-membership for problems with exact rational solutions: a general approach via convex optimization},
  author={Filos-Ratsikas, Aris and Hansen, Kristoffer Arnsfelt and H{\o}gh, Kasper and Hollender, Alexandros},
  booktitle={Symposium on Theory of Computing (STOC)},
  year={2024}
}

@inproceedings{Daskalakis06:Game,
  title={The game world is flat: The complexity of {N}ash equilibria in succinct games},
  author={Daskalakis, Constantinos and Fabrikant, Alex and Papadimitriou, Christos H},
  booktitle={International Colloquium on Automata, Languages, and Programming (ICALP)},
  year={2006}
}

@article{Ackermann08:Impact,
  title={On the impact of combinatorial structure on congestion games},
  author={Ackermann, Heiner and R{\"o}glin, Heiko and V{\"o}cking, Berthold},
  journal={Journal of the ACM (JACM)},
  volume={55},
  number={6},
  pages={1--22},
  year={2008}
}

@inproceedings{Anagnostides25:Complexity,
  author       = {Ioannis Anagnostides and
                  Ioannis Panageas and
                  Tuomas Sandholm and
                  Jingming Yan},
  title        = {The Complexity of Symmetric Equilibria in Min-Max Optimization and
                  Team Zero-Sum Games},
  booktitle    = {Neural Information Processing Systems (NeurIPS)},
  year         = {2025}
}

@phdthesis{Nash50:Non,
  author     = {John Nash},
  title      = {Non-cooperative games},
  school     = {Princeton University},
  year       = {1950},
  department = {Department of Mathematics}
}
\end{document}